\DeclareMathOperator{\sech}{sech}
\DeclareMathOperator{\csch}{csch}
\newtheorem{theo}{Theorem}
\newtheorem{definit}{Definition}
\definecolor{pdcolor}{rgb}{1,0.5,0}
\definecolor{pdblue}{rgb}{0,0,1}
\title{Cellular automaton models for time-correlated random walks: derivation and analysis}
\author[1]{J. M. Nava-Sede\~no}
\author[1,2]{H. Hatzikirou}
\author[3]{R. Klages}
\author[1]{A. Deutsch}
\affil[1]{Technische Universit\"at Dresden, Center for Information Services and High Performance Computing, N\"othnitzer Stra{\ss}e 46, 01062 Dresden, Germany}
\affil[2]{Department of Systems Immunology and Braunschweig Integrated Centre of Systems Biology, Helmholtz Center for Infection Research, Inhoffenstra{\ss}e 7, 38124 Braunschweig, Germany}
\affil[3]{School of Mathematical Sciences, Queen Mary University of London, Mile End Road, London E1 4NS, United Kingdom}
\begin{abstract}

   Many diffusion processes in nature and society were found to be
  anomalous, in the sense of being fundamentally different from
  conventional Brownian motion. An important example is the migration
  of biological cells, which exhibits non-trivial temporal decay of
  velocity autocorrelation functions. This means that the
  corresponding dynamics is characterized by memory effects that
  slowly decay in time. Motivated by this we construct non-Markovian
  lattice-gas cellular automata models for moving agents with memory.
  For this purpose the reorientation probabilities
   are derived from velocity autocorrelation functions that are
  given a priori; in that respect our approach is
  ``data-driven''. Particular examples we consider are velocity
  correlations that decay exponentially or as power laws, where the
  latter functions generate anomalous diffusion. The computational
  efficiency of cellular automata combined with our analytical results
  paves the way to explore the relevance of memory and anomalous
  diffusion for the dynamics of interacting cell populations, like
  confluent cell monolayers and cell clustering.

\end{abstract}
\begin{document}
\maketitle

\section*{Introduction}

Within the past two decades transport processes in many branches of
the sciences were observed to be {\it anomalous}, in the sense that
they do not obey the laws of conventional statistical physics and
thermodynamics
\cite{MeKl00,CKW04,MeKl04,KRS08,KlSo11,HoFr13,MJJCB14,ZDK15}. Important
cases are diffusion processes where the long-time mean square
displacement (MSD) does not grow linearly in time. That is,
$\left\langle r^2 \right\rangle \propto t^{\phi}$, where the angular
brackets denote an ensemble average, does not increase with $\phi=1$
as expected for Brownian motion but either {\it subdiffusively} with
$\phi<1$ or {\it superdiffusively} with $\phi>1$
\cite{SZK93,KSZ96,SKB02}. After pioneering work on amorphous
semiconductors \cite{SM75}, more recently anomalous diffusion has been
detected in many other complex systems
\cite{MeKl04,KRS08,HoFr13,ZDK15}; here well-known examples of physcial
systems are nanopores \cite{Kuk96}, plasmas \cite{Bale05} and glassy
material \cite{BBW08}.\

 Biological systems frequently exhibit anomalous
properties as well: Prominent examples are the foraging of organisms
\cite{VLRS11}, epidemic spreading \cite{BHG06} and the diffusion of
macromolecules in biological cells \cite{HoFr13}. Especially, it was
found that many types of cells migrate anomalously: {\it Hydra} cells
\cite{URGS01}, mammary gland epithelial cells \cite{TWHSH03}, {\it
  MDCKF} cells \cite{DDPKS03}, amoeboid {\it Dictyostelium} cells
\cite{TSYU08,BBFB10}, T cells \cite{HaBa12}, breast carcinoma cells
\cite{MMSLSF15} and stem cells \cite{BBG14} were all experimentally
observed to move superdiffusively, typically with non-Gaussian
position and/or velocity distribution functions
\cite{URGS01,TWHSH03,DDPKS03,TSYU08,BBFB10,HaBa12,MMSLSF15}
accompanied by either exponential or non-exponential position
\cite{HaBa12}, and exponential \cite{RUGOS00,TSYU08,BBFB10} or power
law \cite{URGS01,TWHSH03,DDPKS03} velocity autocorrelation function
(VACF) decay. For T cells it was argued that superdiffusion optimizes
their search to kill intruding pathogens \cite{HaBa12,KBG16}. While all these results are on single cell migration, currently
collective cell migration is moving into the center of interest
\cite{MeVi14}, where cells interact with each other, e.g., by chemical
signalling \cite{McKPL10}. Interesting phase transitions inside dense
tissues of epithelial cell monolayers were reported \cite{Malin17} and
partially traced back to particular features of single-cell migration
\cite{BYMM16}. It was also observed experimentally that superdiffusion
appears to foster the formation of clusters of stem cells leading to
tissue formation \cite{BBG14}. Other works investigate the role of
interacting agents for phase transitions in active matter
\cite{FNCTVW16}, and collective anomalous dynamics emerging from the
interaction of single agents \cite{FeKo17,ARBPHB15}.

On the theoretical side there are many different ways to model
anomalous diffusion in terms of stochastic processes, such as
continuous time random walks (CTRW) \cite{MeKl00,KlSo11}, generalized
Langevin equations \cite{CKW04}, L\'evy flights and walks
\cite{ZDK15}, fractional diffusion equations
\cite{MeKl00,KRS08,KlSo11}, scaled Brownian motion and heterogeneous
diffusion processes \cite{MJJCB14}. A subset of these models, most
notably generalized Fokker-Planck equations
\cite{URGS01,TWHSH03,DDPKS03}, generalized Langevin equations
\cite{TSYU08,BBFB10} and generalized random walks
\cite{HaBa12,MMSLSF15}, has been used to model anomalous movement of single
biological cells. However, solving equations for anomalously moving
single particles analytically or numerically is typically difficult
already \cite{MeKl00,CKW04,KRS08,KlSo11,MJJCB14,ZDK15}. 
To our knowledge there exists no systematic attempt to generalize this
theory to model interacting many-particle systems; the only exception
we are aware of is a line of work in plasma physics \cite{Bale05}.

On the other hand, several models have been introduced to study the
collective movement of particles and cells \cite{lgcacarsten,colloids,rodsw} . Cellular automata (CA) in particular have the advantage
of being less computationally demanding than continuous models when
performing simulations.  A specific type of CA is the so-called
lattice-gas cellular automata (LGCA) \cite{lgca}. In LGCA,
each lattice node can contain several particles, which at each time
step are rearranged within the lattice node according to the
interaction rule, and subsequently moved to a neighboring node.  In a
biological context particles can be regarded as cells, while the LGCA
rules mimic cell migration and interaction.
Furthermore, LGCA have proved to be amenable to mathematical analysis \cite{mf-lgca}. 
 For this reason, LGCA
have been introduced as mesoscopic models for single and collective
cell migration \cite{heterog,allee,glioma,lgcamig}. So far, none of the mentioned models has considered
anomalous migration of single cells.

 It
thus arises the need to design simple fundamental schemes by which the
collective properties of interacting agents can be studied whose
individual dynamics is anomalous. Using our methods will enable to explore the
relevance of microscopic single-particle dynamics for emerging
collective phenomena. We thus devise a scheme by which anomalous
dynamics of many interacting agents can be simulated efficiently,
which is based on capturing the non-trivial decay of VACFs. This
approach generates superdiffusion if the correlation decay is of power
law-type \cite{URGS01,TWHSH03,DDPKS03}. We emphasize that our
data-driven approach can be applied to any moving entity that exhibits
dynamics with non-trivial correlation decay, a feature that may be
expected to hold more generally for the movement of biological
organisms \cite{VLRS11}.

We use the LGCA modeling framework and construct various time-correlated random walk models. After briefly
introducing the LGCA concept, we define an LGCA model for unbiased random walk.
Next, motivated by the biophysical mechanism of single cell crawling
we construct a {\it persistent random walk}
LGCA model wherein angular (orientation) correlations give rise to temporal
correlations. Subsequently, we construct a first LGCA model for {\it
  time-correlated random walk} which is data-driven, as the model's
reorientation probabilities are derived by assuming that the exact
temporal dependence of the VACF is known a priori. Finally, we develop
a {\it generalized time-correlated random walk} LGCA model for
cell movement at short and medium time regimes by curing a deficiency
of our first time-correlated random walk model for short times. Figure \ref{models} shows single
cell tracks with the corresponding VACFs and MSDs for our main two
classes of LGCA models we are dealing with, which are Markovian and non-Markovian random walks, exemplified
by showing their basic features.

\begin{figure}
\begin{center}
\includegraphics[width=125mm]{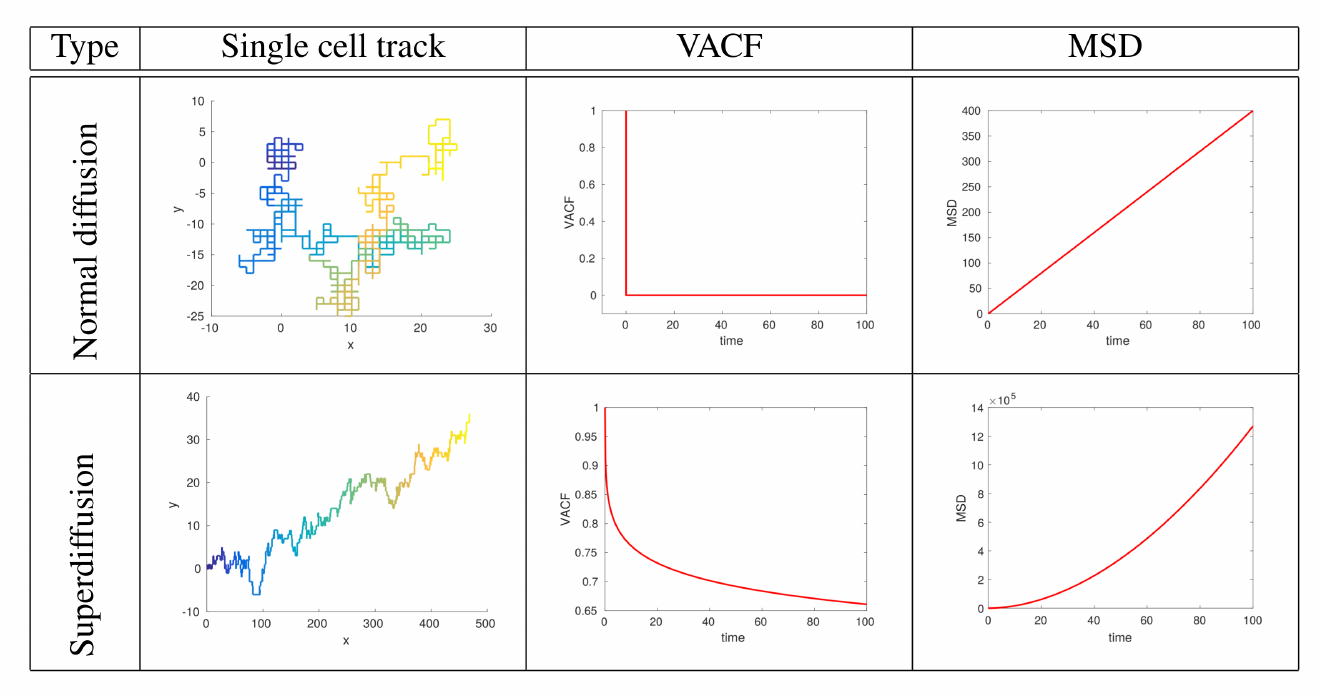}
\end{center}
 \caption{Basic types of diffusive movement in two dimensions characterized by two key quantities (VACF and MSD). Shown in the first column are the tracks of a single particle starting at $(x,y)=(0,0)$ that exhibits either long-time normal diffusion (top row) or superdiffusion (bottom row). The color gradient changes from blue to yellow with elapsed time. The second column displays the particle's velocity autocorrelation function (VACF) Eq.~(\ref{lgca-corr}), the third column its mean square displacement (MSD) Eq.~(\ref{eq:msd}).}
\label{models}
\end{figure}

\section*{Lattice-gas cellular automata}

Cellular automata are mathematical models where the states of discrete
lattice nodes are updated at discrete time
steps. If the states of the lattice sites are Boolean, such states
can be interpreted as presence/abscence of a particle at a particular node.
 The lattice-gas cellular automaton is a specific CA type, which has two important characteristics:
first, particle reorientation and migration are separated into
a probabilistic and a deterministic step, respectively. Secondly, 
to each node, $b$ velocity channels are associated which can be occupied by at most one particle (exclusion principle).
The set of velocity channels is given by
$\vec{c}_j = \left( \cos \frac{2\pi j}{b},\sin \frac{2\pi j}{b}
\right)$, $i\in\left\{ 0,1,\ldots, b-1\right\}$ (see Fig \ref{lgca-dyn}).
  Particles move in discrete time steps of duration $\tau$ to
neighboring nodes located a distance $\varepsilon$
away in the lattice. At each time step
particles adopt the orientation
$\vec{c}_i$ with a probability $P_{i,k}$, called the reorientation
probability, where $t=k\tau$, $k\in\mathbb{N}$ is the elapsed
simulation time. Subsequently, cells will be deterministically
translocated to the nearest neighbor located in the direction of $\vec{c}_i$; see
Fig.~\ref{lgca-dyn} for a sketch of LGCA dynamics.
\begin{wrapfigure}{r}{0.4\textwidth}
\begin{center}
\includegraphics[width=70mm]{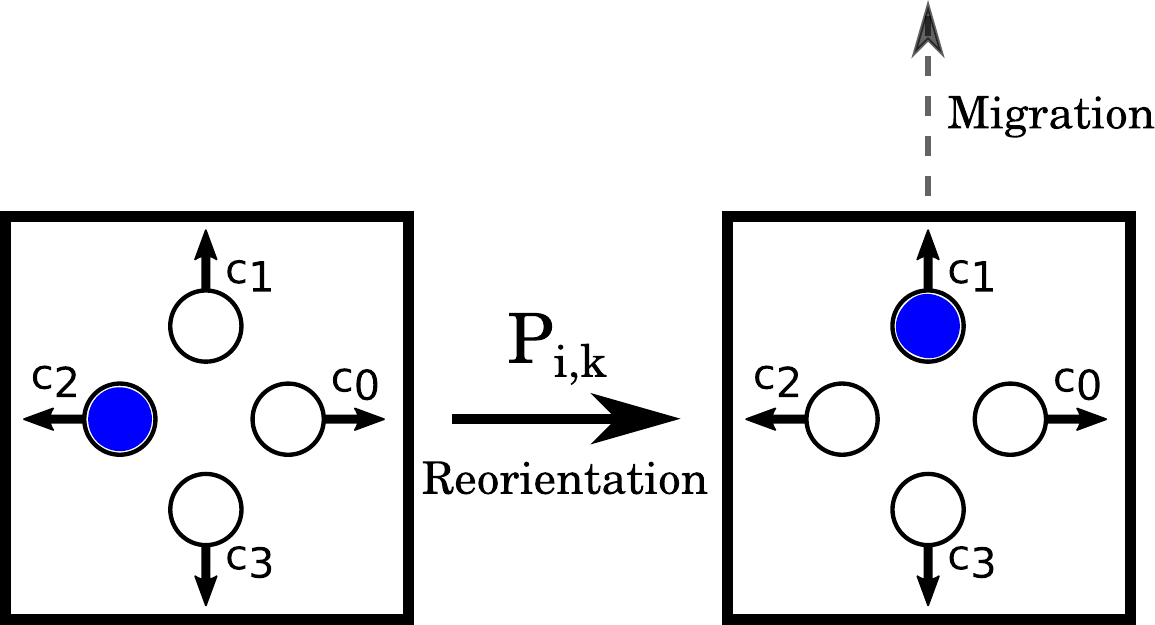}
\end{center}
\caption{LGCA dynamics. At each time
  step a particle is assigned an orientation $\vec{c}_i$ with a
  probability $P_{i,k}$. Subsequently, the particle is translocated to
  the nearest neighbor in the direction of its orientation.}
\label{lgca-dyn}
\end{wrapfigure}
If we were to have $N$ different particles in a single node, then the probability of 
the particles adopting the orientations $\vec{c}_{n^1},\vec{c}_{n^2},\ldots,\vec{c}_{n^N}$
would be given by 
\begin{equation}
P_{n^1,\ldots,n^N,k}=\prod_{\ell=1}^NP_{n^{\ell},k}.
\label{confprobs}
\end{equation}

\section*{Classical random walk}

Here we define an LGCA model for unbiased random walk\cite{Pea06}. 
In this model, at all timesteps, all orientations are chosen with equal probability \cite{rwtheo,mixing}. This means that the reorientation
probability is given by
\begin{equation}
 P_{i,k}=\frac{1}{b}.
 \label{rw-probs}
\end{equation}
In order to characterize the movement of a particle in this model we calculate
time-dependent expressions for the VACF and the MSD, which measure the
persistence, in terms of memory decay in time, and the spatial
exploratory power of a moving particle, respectively. In LGCA models
space, time and particle velocities are discrete so that the VACF is
given by \cite{KlKo02}
\begin{equation}
 g(k)=\left\langle \vec{c}_{i_0} \cdot \vec{c}_{i_k} \right\rangle=\sum_{i_k=1}^{b}P_{i_k,k}\left[ \vec{c}_{i_0} \cdot \vec{c}_{i_k}\right],
 \label{lgca-corr}
\end{equation}
where $\vec{c}_{i_k}$ is the orientation of the particle at the $k$-th
time step. The MSD is calculated by
\begin{equation}
 \left\langle r_k^2 \right\rangle =\sum_{r_k} r_k^2P_{r_k,k}, \label{eq:msd}
\end{equation}
where $r_k$ is the norm of the particle displacement at time step $k$
defined as $\vec{r}_k=\vec{x}_k-\vec{x}_0$, where $\vec{x}_k$ is the
position of the particle at time step $k$. 
The probability
$P_{r_k,k}$ can be calculated from Eq.~(\ref{rw-probs}) by noticing
that $\vec{r}_k=\varepsilon\sum_{k}\vec{c}_{i_k}$.
In this simple random walk model Eq.~(\ref{lgca-corr}) reduces to a
sum of cosines over homogeneously distributed angles. Hence the VACF is
given by $g(k)=\delta_{0,k}$, where $\delta_{i,j}$ is the
Kronecker delta. In the limit $\tau \rightarrow 0$ the VACF tends to
\begin{equation}
 g(t)=\delta(t),
 \label{rw-ct}
\end{equation}
where $\delta(t)$ is the Dirac delta function.  Eq.~(\ref{rw-ct})
means that the movement of the particle is uncorrelated as soon as it starts
moving, i.e.\ the orientation of the particle at any time step $k$ is
completely independent from its previous orientation, which is the
Markov property. On the other hand, simple combinatorics can be used
to calculate the particle's MSD yielding $\left\langle r_k^2
\right\rangle=k\varepsilon^2$. We can rewrite this expression by using
the general definition of the diffusion coefficient
\begin{equation}
 D=\lim_{t\to\infty}\frac{\left\langle r_t^2 \right\rangle}{2dt},
 \label{dc_def}
\end{equation}
where $d$ is the dimension of space. For a memoryless random walk this
equation boils down to $D_{rw}=\frac{\varepsilon^2}{2d\tau}$ with an
MSD of $\left\langle r_k^2 \right\rangle=2dD_{rw}k\tau$.  Given that
$\tau$ is the time step length and that $t=k\tau$ is the elapsed
time, the MSD is 
\begin{equation}
 \left\langle r_t^2 \right\rangle =2dD_{rw}t.
 \label{rw-msd}
\end{equation}
Eq.~(\ref{rw-msd}) shows that the classical random walk model
trivially yields a normal diffusion process, where the MSD increases
linearly in time \cite{Reif}.

\section*{Persistent random walk}

The assumption of the classical random walk in Eq.~(\ref{rw-probs})
that all the directions of movement are equally probable is generally
not true.  We will now construct an LGCA model
motivated by a simple biophysical model for a single, persistently
moving cell.

\subsection*{Rule derivation}

\begin{wrapfigure}{r}{0.35\textwidth}
\begin{center}
\includegraphics[width=50mm]{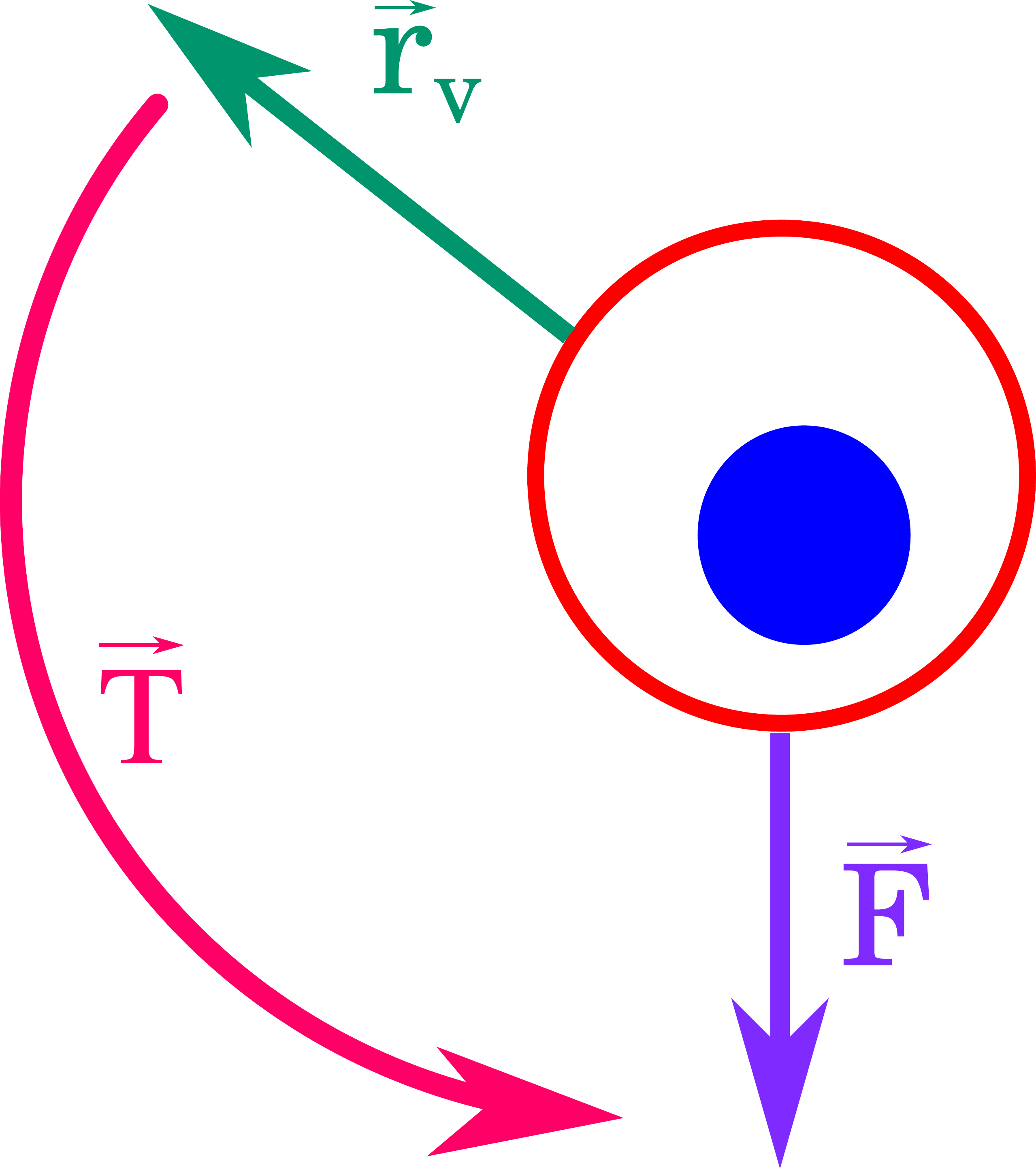}
\end{center}
\caption{Reorientation of a biological cell moving persistently. A cell which moves in a direction $\vec{r}_v$ feels an intracellular force $\vec{F}$ and reorients towards the direction of the force due to the torque $\vec{T}$.}
\label{per-model}
\end{wrapfigure}

Biological cells move by exerting forces to propel themselves. In the
case of eukaryotic cells such as fibroblasts, movement is achieved by
crawling over the substrate.  Crawling is performed by polymerization
of the actin cytoskeleton at the leading edge propelling the cell in
this direction.  We can identify the direction of the
actin concentration gradient with the direction of movement of a cell
$\vec{r}_v$. Furthermore, intracellular forces due to actin activity
$\vec{F}$ point towards the direction of new actin polymerization.
The cell then rotates/reorients/repolarizes due to the torque defined
by $\vec{T}=\vec{r}_v\times \vec{F}$, whose norm is given by
\begin{equation}
\label{normtorque}
 T=r_vF\sin\left(\theta_F-\theta_{r}\right),
\end{equation}
where $T=\| \vec{T} \|$, $r_v=\| \vec{r}_v \|$, $F=\| \vec{F} \|$,
and $\theta_r$ and $\theta_F$ are the angular components of
$\vec{r}_v$ and $\vec{F}$, respectively; see Fig.~\ref{per-model}. In
the overdamped regime, characteristic of the cellular environment, the
intracellular force $\vec{F}$ will not cause the cell to rotate
indefinitely but rather will cause the cell to rotate until
$\vec{r}_v$ and $\vec{F}$ are parallel.  Taking this into account, it
is possible to rewrite Eq.~(\ref{normtorque}) as
$T=r_vF\sin\left(\theta_{rt}-\theta_r\right)$, where $\theta_{rt}$ is
the direction of motion of the cell after it has finished its
reorientation.

The torque is given in terms of an energy of rotation $U(\theta_r,\theta_{rt})$ as
$\vec{T}=-\frac{\partial U(\theta_r,\theta_{rt})}{\partial
  \theta_r}$. Using this equation, the energy of rotation is then
given by
\begin{equation}
\label{energy}
U(\theta_r,\theta_{rt})=-\upsilon \cos\left(\theta_{rt}-\theta_r\right),
\end{equation}
where $\upsilon=r_vF$ is the amplitude of the torque generated inside
the cell. Having defined the energy of rotation, Eq.~(\ref{energy}), we can
describe the cell's reorientation by a Langevin equation \cite{peruani08} as $\frac{\partial \theta_r}{\partial
  t}=-\gamma\frac{U(\theta_r,\theta_{rt})}{\partial\theta_r}+\xi(t)$
with relaxation constant $\gamma$ and a zero-mean, delta correlated
noise term $\xi(t)$ such that $\left\langle \xi(t)
  \xi(t')\right\rangle = 2D_{\theta}\delta(t-t')$, where $D_{\theta}$
is the rotational diffusion coefficient. Based on this we can
immediately derive the LGCA reorientation probabilities
\cite{langlgca}. 

\begin{wrapfigure}{r}{0.4\textwidth}
\begin{center}
\includegraphics[width=70mm]{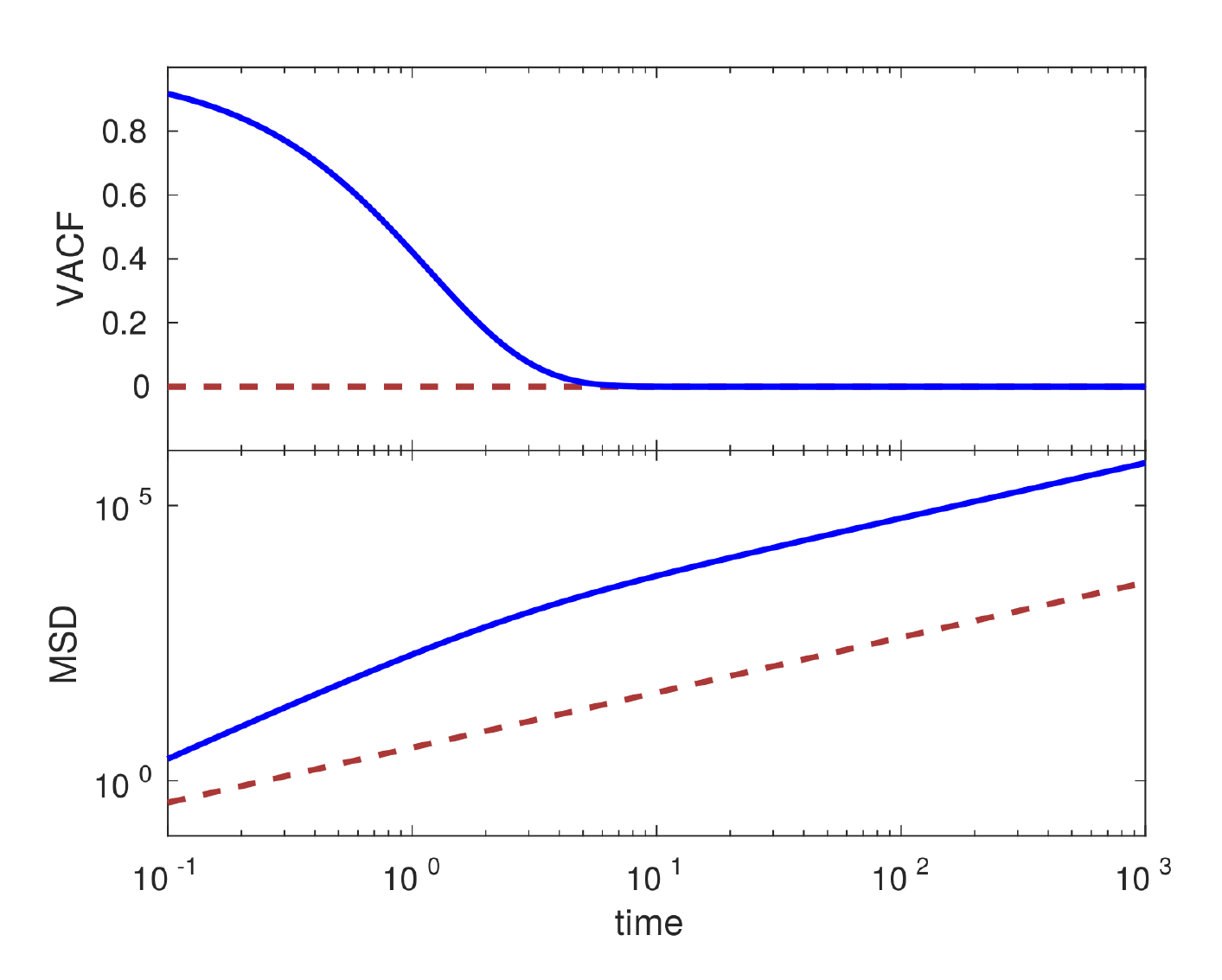}
\end{center}
\caption{Comparison between random walk and persistent random walk
  models. Shown are theoretical results for VACF (top) and MSD
  (bottom) in the random walk (dashed maroon line) and persistent (solid blue line)
  models. The parameter values are $D_{rw}=1$, $v=16$, and
  $\beta=5$.}
\label{vmrwmark}
\end{wrapfigure}

 These probabilities for a single cell then read
\begin{equation}
 P_{i_k,k}=\frac{1}{Z}\exp\left[ \beta \left( \vec{c}_{i_{k-1}} \cdot \vec{c}_{i_k} \right) \right],
 \label{lgcatrans}
\end{equation}
where $\vec{c}_{i_{k-1}}$ is the orientation of the cell at the
previous time step, $Z$ is the normalization constant (also known as the partition function) and
$\beta=\frac{\tilde{\gamma}}{D_{\theta}}$ is the sensitivity, where
$\tilde{\gamma}= \upsilon\gamma$ is the effective relaxation
constant.

\subsection*{Model analysis and results}
\label{vacfmarkov}

Using Eq.~(\ref{lgcatrans}) we can calculate the VACF and MSD for this
model.  By using the properties of the partition function $Z$ in
Eq.~(\ref{lgcatrans}) the VACF at every time step $k$ is (see Sec.~B in the
Supplementary Information)
\begin{equation}
 g(k)=\exp\left(\alpha k\right),
 \label{marktruec}
\end{equation}
where the exponent $\alpha$ depends heavily on the lattice dimension
and geometry. In particular, in a 2D square lattice we have
$\alpha=\ln \left[ \tanh \left( \frac{\beta}{2} \right)\right]$. In
all geometries the exponent is $\alpha <0$ over its domain $\beta>0$
(see again Sec.~B in the Supplementary Information).

Equation~(\ref{marktruec}) can be generalized to continuous time and
space by employing the relations between the time and space scalings,
namely the diffusion coefficient in the random walk limit $D_{rw}$ and
the instantaneous cell speed $v=\frac{\varepsilon}{\tau}$, where
$\varepsilon$ is the lattice spacing and $\tau$ the time step length.
Taking the limit $\tau \to 0$ yields the VACF in continuous time and
space
\begin{equation}
 g(t)=\exp\left(\frac{\alpha v^2}{2dD_{rw}} t\right).
 \label{markcontvacf}
\end{equation}

When time is discrete the MSD is given by \cite{Schu04,msdderiv} (see Sec.~A in the Supplementary Information) $\left\langle r^2
\right\rangle=2dD_{rw}k\tau+ \left\langle\sum_{i=1}^k \sum_{j=1}^k
  v^2\tau^2 \cos \left( \theta_i-\theta_j
  \right)\left(1-\delta_{ij}\right) \right\rangle$.  Calculating the
expected value on the right hand side we obtain (see again Sec.~A in
the Supplementary Information)
\begin{equation}
 \left\langle r_k^2 \right\rangle=2dD_{rw}k\tau+2v^2\sum_{i=1}^{k}\left(k\tau-i\tau \right)g(i)\tau.
 \label{corrmsd}
\end{equation}

Using Eq.~(\ref{marktruec}) and taking the limit of small time step
length we get
\begin{equation}
 \left\langle r_t^2 \right\rangle=2dD_{rw}t+2v^2\int_0^t(t-\tau)e^{\frac{v^2 \alpha}{2dD_{rw}}\tau}\mathrm{d}\tau,
 \label{markintmsd}
\end{equation}
which can be easily integrated to obtain the MSD for continuous time
\begin{equation}
 \left\langle r_t^2 \right\rangle=2dD_{rw}t\left(1-\frac{2}{\alpha}\right)+\left(\frac{2\sqrt{2}dD_{rw}}{v\alpha}\right)^2\left[\exp\left(\frac{v^2\alpha t}{2dD_{rw}}\right)-1\right].
 \label{markcontmsd}
\end{equation}

\begin{figure}
\centering
\includegraphics[width=150mm]{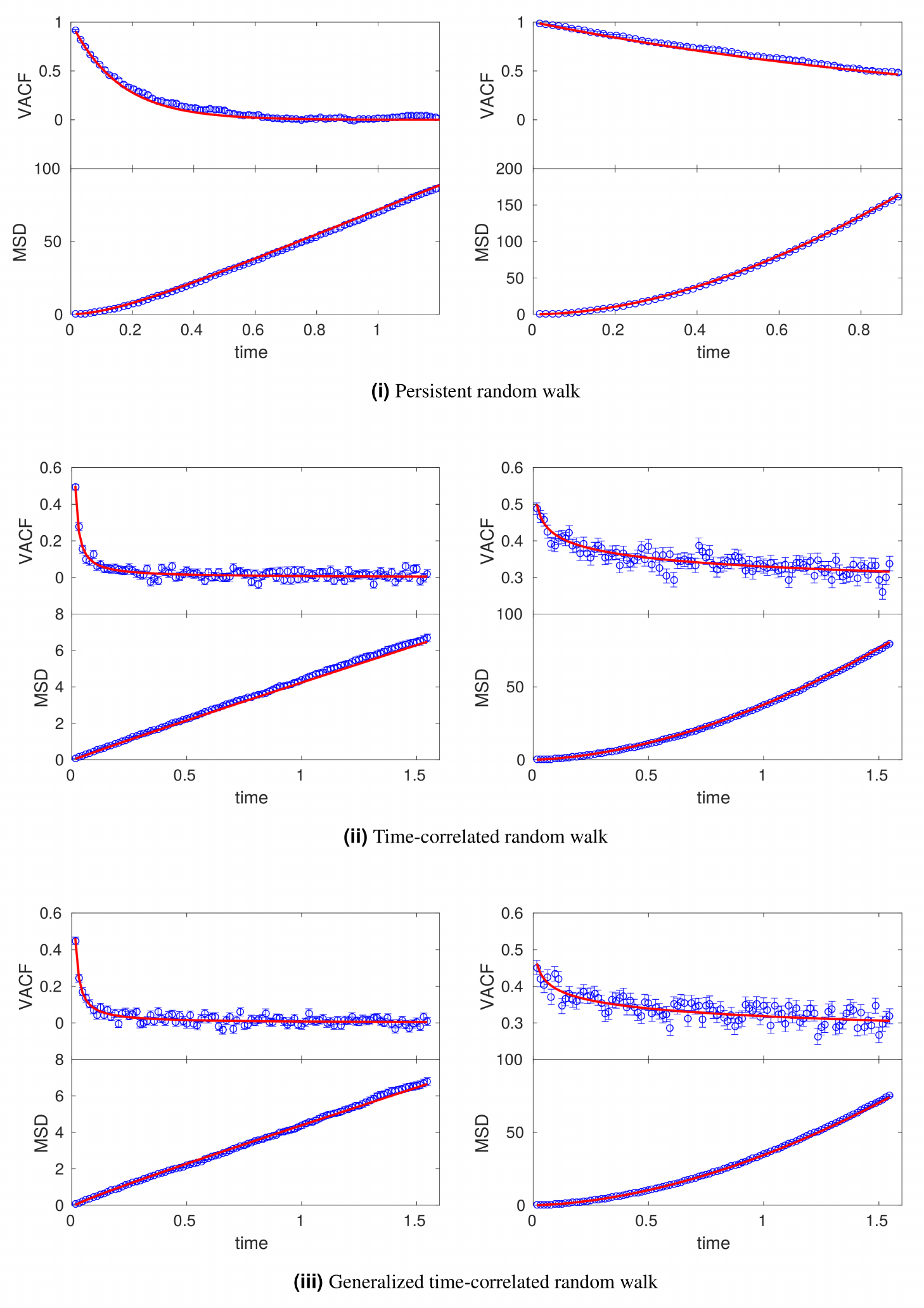}
\caption{Comparison between persistent, time-correlated and
  generalized time-correlated random walk models. Shown are simulation
  results (circles, mean $\pm$ standard error of the mean) and theoretical prediction (solid line) for VACF (top row) and MSD
  (bottom row). Parameters are $v=16$ and $D_{rw}=1$ in all cases. i)
  Sensitivity values: $\beta=3$ (left), $\beta=5$ (right). ii) and iii)
  $C_0=0.5$, $\Delta=0.016$, and exponents: $\phi=1$ (left), $\phi=0.1$ (right).}
\label{theovssims}
\end{figure}

Equation~(\ref{markintmsd}) agrees with the formal solution of the MSD
for an overdamped Langevin equation with colored noise
\cite{ChKl12}. Correspondingly, Eq.~(\ref{lgcatrans}) coincides with a
Langevin process where the noise is not white but colored whose
correlation is given by Eq.~(\ref{markcontvacf}). Using
Eq.~(\ref{markcontmsd}) we find that
$ \left\langle r^2 \right\rangle \propto t$ when $t \rightarrow
\infty$ rather quickly.

Comparing Eq.~(\ref{rw-ct}) to Eq.~(\ref{markcontvacf}) in
Fig.~\ref{vmrwmark} we see that in this second model the velocities
are no longer delta correlated but now decay exponentially in time.
On the other hand, for long times both Eqs.~(\ref{rw-msd}) and
(\ref{markcontmsd}) yield normal diffusion. However, for short times
cells performing persistent random walks move superdiffusively,
contrary to cells performing classical random walks; see again
Fig.~\ref{vmrwmark}. In Fig.~\ref{theovssims}i, Eqs.~(\ref{markcontvacf}) and
(\ref{markcontmsd}) are compared with results from LGCA simulations, where
 we see that the theory adequately predicts the observed simulation results.
Details on the computational implementation are found in Sec. H of the Supplementary Information.

\section*{Time-correlated random walk}

Due to the exponential decay of correlations the previous model did
not show superdiffusion at long time scales. It turns out that
finding a homogeneous, isotropic Markovian model that shows
superdiffusion and power-law decaying correlations is
not possible (for a proof see Sec.~C of the Supplementary Information).

\begin{theo}
  The velocity autocorrelation function of a particle whose
  orientations are given by a homogeneous, symmetric Markov chain is
  either delta-correlated, i.e. $g_k=\delta_{0,k}$, where $\delta$ is
  the Kronecker delta; alternating, i.e. $g_k=(-1)^ka^k$,
  $a\in\mathbb{R}^+$; or exponentially decaying, i.e.\ $g_k=e^{\alpha
    k}$, $\alpha\leq0$.
\end{theo}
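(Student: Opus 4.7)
The plan is to translate the hypotheses into an algebraic structure on the transition matrix and then diagonalize by discrete Fourier analysis, exploiting the rotational symmetry of the $b$ velocity channels. Homogeneity means the one-step transition matrix $P$, whose entry $P_{ij}$ is the probability of adopting $\vec{c}_j$ given current orientation $\vec{c}_i$, is time independent. I would argue that ``symmetric'' in the isotropic LGCA setting forces invariance of $P$ both under the cyclic relabeling $i \mapsto i+1 \pmod b$ and under reflection $i \mapsto -i \pmod b$, so that $P_{ij} = p\!\left((j-i)\bmod b\right)$ for some probability weights $p(n)$ satisfying $p(-n) = p(n)$. Thus $P$ is a symmetric circulant stochastic matrix, and the uniform distribution on channels is its stationary distribution.

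Next I would diagonalize $P$ via the discrete Fourier transform. Its eigenvectors are the Fourier modes $\phi_m$ with $(\phi_m)_j = e^{2\pi\imath m j/b}$ and eigenvalues
\[
\lambda_m \;=\; \sum_{n=0}^{b-1} p(n)\, e^{2\pi\imath m n/b},
\]
with every $\lambda_m$ real by the reflection symmetry $p(-n)=p(n)$ and $|\lambda_m|\le 1$ by stochasticity. Using the complex representation $z_j := e^{2\pi\imath j/b}$, the cosine linearizes as $\vec{c}_i\cdot\vec{c}_j = \tfrac{1}{2}(z_i\bar{z}_j + \bar{z}_i z_j)$, and the crucial observation is that $(z_0,\dots,z_{b-1})^{\top}$ is precisely the eigenvector $\phi_1$. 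It follows that $\mathbb{E}[z_{i_k}\mid i_0] = \lambda_1^{k} z_{i_0}$ and analogously for $\bar{z}_{i_k}$; taking expectation over the uniform distribution for $i_0$ and using $|z_{i_0}|^2 = 1$ collapses the VACF to the compact identity $g(k) = \lambda_1^{k}$.

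The three alternatives of the theorem then fall out by splitting on the sign of $\lambda_1 \in [-1,1]$. If $\lambda_1 = 0$, then $g(k) = 0$ for $k\ge 1$ while $g(0)=1$, giving $g_k = \delta_{0,k}$. If $\lambda_1 > 0$, setting $\alpha = \ln \lambda_1 \le 0$ recovers the exponential decay $g_k = e^{\alpha k}$. If $\lambda_1 < 0$, writing $a = -\lambda_1 > 0$ gives the alternating form $g_k = (-1)^{k} a^{k}$.

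The main obstacle is pinning down the precise content of ``symmetric'' so that the reflection relation $p(-n) = p(n)$ is available; without it, $\lambda_1$ can be complex, in which case $g(k) = \mathrm{Re}(\lambda_1^{k})$ is a damped oscillation, a fourth behaviour not covered by the theorem. So the statement implicitly assumes reflection invariance in addition to the circulant (rotational) structure. Once $P$ is identified as symmetric circulant, the Fourier diagonalization and the case analysis are essentially mechanical. A minor technical point is that $g(k)$ must be defined with respect to the stationary (uniform) distribution so that it depends only on the lag $k$; this is automatic in the isotropic setting.
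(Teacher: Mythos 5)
Your proposal is correct, and at its core it rests on the same fact as the paper's own proof: the map $\vec{v}\mapsto\vec{c}_0\cdot\vec{v}$ is an eigenfunction of the one-step transition operator with a real eigenvalue $a\in[-1,1]$, so that $g_k=a^k$, after which the trichotomy on the sign of $a$ is identical in both arguments. What differs is how that fact is reached. The paper proves it by induction: it applies the Chapman--Kolmogorov equation and shows by an explicit rotation-matrix computation that $\sum_{\vec{v}}(\vec{c}_0\cdot\vec{v})P(\vec{v}\mid\vec{u})=(\vec{c}_0\cdot\vec{u})\,a$, the sine components cancelling pairwise because $P(\theta)=P(-\theta)$; iterating gives $g_{k+1}=a\,g_k=a^{k+1}$. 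You instead identify the transition matrix as a symmetric circulant, diagonalize it by the discrete Fourier transform, and observe that the velocity components are exactly the first Fourier mode, so $a=\lambda_1=\sum_n p(n)\,e^{2\pi \mathrm{i} n/b}$. The spectral route buys some structural clarity: the bound $|a|\le 1$ is immediate from stochasticity rather than from the elementary estimate $-1\le\sum_\theta p(\theta)\cos\theta\le 1$, the constant $a$ is identified as a named eigenvalue rather than emerging from a computation, and---as you note---it makes visible the fourth behaviour (a damped oscillation $g_k=\mathrm{Re}(\lambda_1^k)$) that would appear if the reflection symmetry $p(-n)=p(n)$ were dropped, thereby locating precisely which hypothesis of the theorem excludes it; this observation matches the role of Definition~3 in the supplement. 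The paper's induction is more elementary and self-contained, needing nothing beyond $2\times2$ rotation matrices. One small remark: the paper defines $g_k$ with the deterministic initial condition $X(0)=\vec{c}_0$ rather than by averaging $i_0$ over the uniform stationary distribution; by rotational symmetry the two definitions coincide here, so your closing caveat about stationarity is not actually needed.
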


To reproduce superdiffusion and power law decaying autocorrelations,
we will construct a non-homogeneous model by assuming that the time
dependency of the VACF is a known power law.

\subsection*{Rule derivation}

We now assume that the VACF $g(t)$ is known. In particular, if the
movement is power law-correlated the VACF has the form \cite{ChKl12}
\begin{equation}
 g(t)=C_0\left(\frac{\Delta}{t}\right)^{\phi}, \ t \geq \Delta,
 \label{anom_corr}
\end{equation}
where $\Delta > 0$ and $ 0 < \phi < \infty$ and assume that $\Delta\ll 1$,
to disregard the movement at short times, where Eq. \ref{anom_corr} diverges.
 The rate of decay
of the VACF is proportional to the exponent $\phi$. The crossover
time $\Delta$ specifies the time a which $g(t)=C_0$. The walk is positively
correlated if $C_0>0$, and anti-correlated if $C_0<0$.  Because the
process is non-homogeneous, $P_{i_k,k}$ in Eq.~(\ref{lgca-corr})
explicitly depends on the velocity channels $\vec{c_i}$ and the
current time step $k$. Combining Eqs.~(\ref{lgca-corr}) and
(\ref{anom_corr}) we obtain the following relation \cite{KlKo02}:
\begin{equation}
 \sum_{i=1}^{b}P_{i_k,k}\left[ \vec{c}_{i_0} \cdot \vec{c_{i_k}}\right]=g(k).
 \label{vacf_lgca}
\end{equation}

It is possible to derive the reorientation probabilities $P_{i_k,k}$ by
expanding Eq. \ref{vacf_lgca} for every time step.  Additionally, in
order to reduce the number of equations, we make the following
assumptions:

\begin{itemize}
\item The reorientation probabilities are independent, that is, the
  probability of following a certain trajectory is
  $P_{i_1,i_2,\cdots,i_k}=\prod_{j=1}^kP_{i_j,j}$.
\item There is symmetry around the initial orientation, i.e. if
  $\vec{c}_{i_k,k}\cdot \vec{c}_{i_0}=\vec{c}_{j_k,k}\cdot
  \vec{c}_{i_0}$ then $P_{i_k,k}=P_{j_k,k}$, $i_k\neq j_k$.
\end{itemize}
\label{1deriv}

\begin{wrapfigure}{r}{0.4\textwidth}
\begin{center}
\includegraphics[width=70mm]{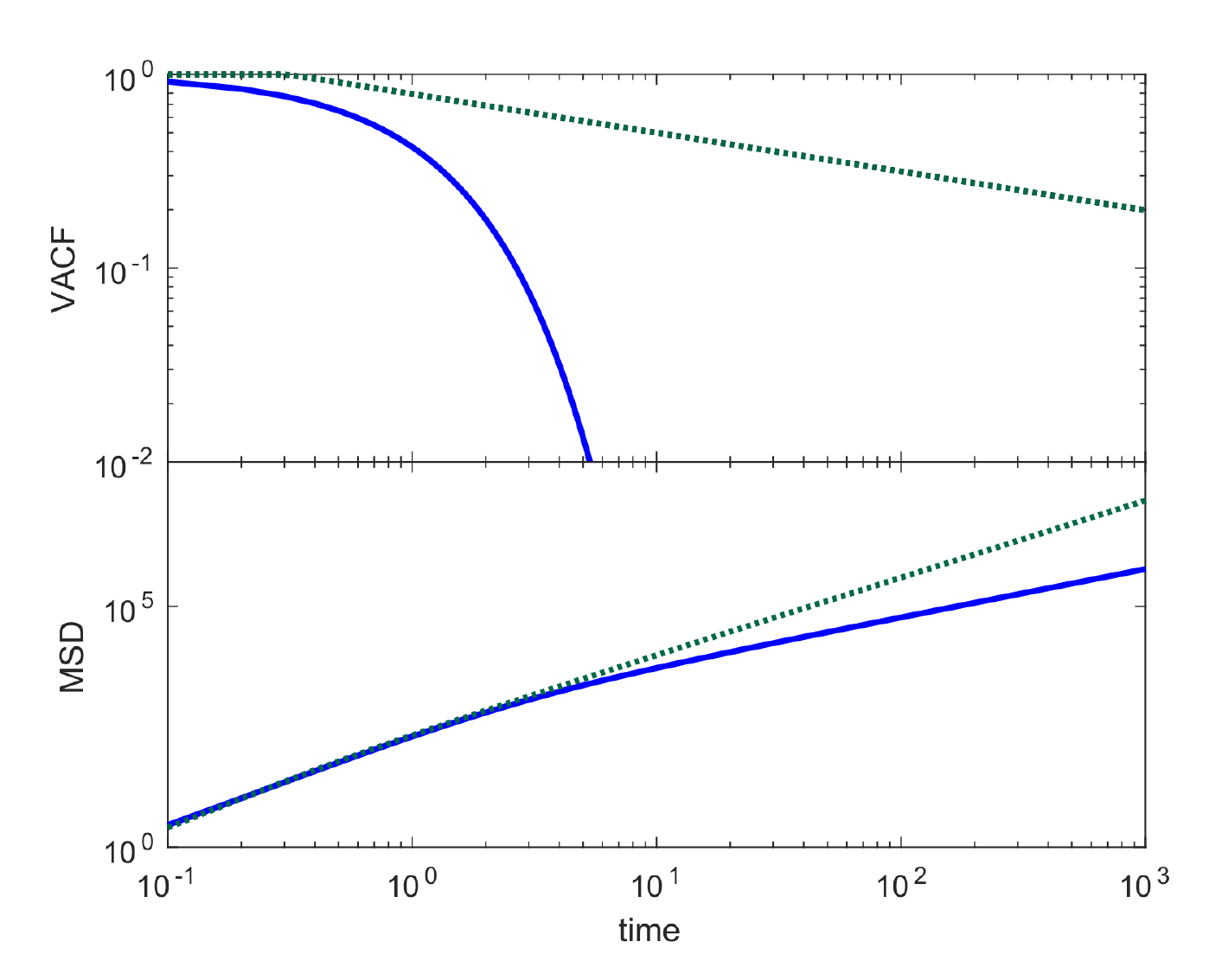}
\end{center}
\caption{Comparison between persistent and time-correlated random walk
  models. Shown are theoretical results for VACF (top) and MSD
  (bottom) in the persistent (solid blue line) and time-correlated (dotted green
  line) models. The parameter values are $D_{rw}=1$, $v=16$,
  $\beta=5$, $C_0=0.5$, $\Delta=10$, and $\phi=0.2$.}
\label{vmmarkpow}
\end{wrapfigure}

Using these assumptions we can derive the general expression for the
reorientation probabilities determining a certain VACF (see Sec.~D in the
Supplementary Information)
\begin{equation}
 P_{i_k,k}=\frac{1+d\left[ \vec{c}_{i_0} \cdot \vec{c}_{i_k} \right] g(k)}{b},
 \label{pxgeneral}
\end{equation}
where $d$ is the dimension of space and $b$ is the number of lattice
directions given by the lattice geometry.  If the VACF follows a power
law, then Eq.~(\ref{pxgeneral}) is always valid if the crossover time
$\Delta$ is smaller than the time step length, as the divergence of
Eq.~(\ref{anom_corr}) is avoided. If $\Delta \gg 0$ , we assume that 
the movement at short times is completely correlated, i.e. the VACF is given by
\begin{equation}
g(t)= \begin{cases} 1 & t\leq t^{\star} \\  C_0\left(\frac{\Delta}{t}\right)^{\phi} & t>t^{\star} \end{cases},
\end{equation}
where $t^{\star}$ is such that $
C_0\left(\frac{\Delta}{t^{\star}}\right)^{\phi}=1$.
We can
 then define a piecewise reorientation probability
\begin{equation}
P_{i_k,k}= \begin{cases} \delta\left(\vec{c}_{i_k}-\vec{c}_{i_0}\right) & k \leq \omega \\ \frac{1+d\left[ \vec{c}_{i_0} \cdot \vec{c}_{i_k} \right] g(k)}{b} & k>\omega\end{cases},
\label{pieceprobs}
\end{equation}
where $\omega\tau=t^\star$ is the duration of ballistic motion.

\subsection*{Model analysis and results}

For this model the VACF is obviously known as the reorientation
probabilities were calculated specifically to reproduce it. We now
calculate the MSD for this model.  The reorientation probabilities given
by Eq.~(\ref{pxgeneral}) only depend on the initial cell orientation
$\vec{c}_{i_0}$ and on the time step $k$, so they are independent of
other orientations at other times.  Because of this independence of
orientations the MSD is given by (see Sect.~A in the Supplementary Information)
\begin{equation}
\left\langle r_k^2 \right\rangle=2dD_{rw}\left[ k\tau -2\sum_{i=1}^k g^2(i)\tau\right] + 2v^2\sum_{i=1}^k \sum_{j=i}^kg(i)g(j)\tau^2
\label{msdind}
\end{equation}
yielding a Taylor-Green-Kubo formula \cite{KlKo02,KKCSG06}.  In
the limit of small time step lengths $\tau \rightarrow 0$ the MSD is
given by
\begin{equation}
 \left\langle r_t^2 \right\rangle=2dD_{rw}\left[ t - 2 \int_{\Delta}^t g^2(\tau)\mathrm{d}\tau \right]+ 2v^2 \int_{\Delta}^t \int_{\tau}^t g(\tau)g(k)\mathrm{d}k \mathrm{d}\tau .
 \label{msdgenpow}
\end{equation}
Equation~(\ref{msdgenpow}) shows that there is a correction to the random
walk diffusion coefficient as well as a new term depending on the
particle speed. If $g(t)$ is given by Eq.~(\ref{anom_corr}),
Eq.~(\ref{msdgenpow}) can be integrated yielding
\begin{subequations}
\begin{align}
 \begin{split}
  \left\langle r_t^2 \right\rangle=& 2dD_{rw}\left\{ t + \frac{2C_0^2\Delta}{1-2\phi} \left[1-\left( \frac{\Delta}{t} \right)^{2\phi-1}\right] \right\} \\ & +2\left(\frac{vC_0\Delta}{\phi-1}\right)^2\left[\frac{1}{2}\left(\frac{\Delta}{t}\right)^{2\phi-2}-\left(\frac{\Delta}{t}\right)^{\phi-1}+\frac{1}{2}\right], \qquad \phi \neq 1, \phi \neq \frac{1}{2}
 \end{split}
 \\
 \begin{split}
  \left\langle r_t^2 \right\rangle=& 2dD_{rw}\left[ t + 2C_0^2\Delta \ln\left( \frac{\Delta}{t}\right)\right]\\ &  +2v^2C_0^2\left[ 2t\Delta+2\Delta^2-4\Delta\left(\Delta t\right)^{\frac{1}{2}} \right], \qquad \phi=\frac{1}{2}
 \end{split}
 \\
 \begin{split}
  \left\langle r_t^2 \right\rangle=& 2dD_{rw}\left[ t + 2C_0^2 \left(\frac{\Delta^2}{t} -\Delta \right)\right]   \\ & + \left[vC_0\Delta \ln\left(\frac{\Delta}{t}\right)\right]^2, \qquad \qquad \qquad \qquad \phi=1.
 \end{split}
\end{align}
\label{msdpowintegrated}
\end{subequations}
These expressions for the MSD are valid when $\Delta \rightarrow
0$. For long crossover times when reorientation probabilities are given
by Eq.~(\ref{pieceprobs}), the MSD is (see Sec.~E in the
Supplementary Information):
\begin{equation}
\left\langle r_t^2 \right\rangle = \begin{cases}(vt)^2 & t\leq t^{\star}, \\
\!\begin{aligned}
& 2dD_{rw}\left[(t-t^{\star})-2\int_{t^{\star}}^tg^2(\tau)\mathrm{d}\tau \right] \\ &
+v^2\left[2\int_{t^{\star}}^t\int_{\tau}^tg(\tau)g(k)\mathrm{d}k\mathrm{d}\tau + t^{\star 2}+ 2t^{\star}\int_{t^{\star}}^tg(\tau)\mathrm{d}\tau\right]
\end{aligned} & t>t^{\star}.
\end{cases}
\label{piecemsd}
\end{equation}

Figure~\ref{theovssims}ii shows a comparison of Eqs.~(\ref{anom_corr}) and
(\ref{msdgenpow}) with LGCA simulations. 
Details on the computational implementation are found in Sec. H of the Supplementary Information.
 From Eqs.~(\ref{msdpowintegrated}) we see that in general $\left\langle
  r_t^2 \right\rangle \sim t \pm t^{1-2\phi}+t^{2\left(1-\phi
  \right)}-t^{1-\phi}$, which defines three regimes:
\begin{enumerate}
\item $\phi<\frac{1}{2}$: superdiffusive regime, arising from the term
  $t^{2\left(1-\phi \right)}$
\item $\frac{1}{2}< \phi < 1$: subdiffusive regime, as the term
  $t^{2\left(1-\phi \right)}$ dominates at short times only
\item $\phi > 1$: normal diffusive regime, as the linear term is the
  dominating term
\end{enumerate}

Figure~\ref{vmmarkpow} shows that, while the VACF decays rapidly in the persistent model,
 in the time-correlated model the VACF decays much more slowly. Additionally, 
the movement is superdiffusive in both models are short times, however this behavior 
is long-lasting in the time-correlated model.

\section*{Generalized time-correlated random walk}

The reorientation probabilities derived for the time-correlated random
walk are only valid for certain time ranges, due to the divergence of
the VACF when $t\rightarrow 0$.  We will now derive a generalized
model which is valid on both short and long time scales. For this
purpose we use what is called the maximum caliber formalism
\cite{calibertheo}, which we introduce briefly.

\subsection*{Rule derivation}

The maximum caliber formalism has been proven successful to derive
models for dynamic systems from data. The
procedure consists in maximizing the entropy over a path of system
evolutions, with the constraint of reproducing certain observables.
The procedure of entropy maximization does not only ensure that the
resulting model contains as few assumptions as possible, but is also
considered the only method of correctly obtaining unknown probability
distributions from known data \cite{calibertheo}. The procedure is as
follows:

Let the path entropy, or caliber, be defined as $
\mathcal{C}=-\sum_{\Gamma}P_{\Gamma}\ln P_{\Gamma}, $ where $\Gamma$
is a possible path followed by the system during its evolution. The
probability of following such a path is given by $P_{\Gamma}$. In the
case of a single random walker, the path is the entire history of
particle velocities $\Gamma=i_0i_1i_2\cdots i_k$ up to the last time
step $k$.  Furthermore, we constrain the unknown probabilities by a
normalization constant and the observed VACF (in this case
Eq.~(\ref{vacf_lgca})). Then the problem translates into optimizing
the functional \cite{caliberappl} $
\tilde{\mathcal{C}}\left[P_{\Gamma}\right]=-\sum_{\Gamma}P_{\Gamma}\ln
P_{\Gamma}+\sum_{j=1}^k \beta(j) \left[
  \sum_{\Gamma}P_{\Gamma}\left(\vec{c}_{i_0} \cdot
    \vec{c}_{i_j}\right)-g(j)\right]+\lambda\left(\sum_{\Gamma}P_{\Gamma}-1\right)
$, where $\beta(j)$ and $\lambda$ are Lagrange multipliers to be
determined. The Lagrange multiplier $\beta(j)$ is given by
$\beta(j)=dg(j)$ (see Sec.~F in the Supplementary Information), and $\lambda$ determines
the normalization constant.

Using the expression for $\beta(j)$ we obtain the reorientation probability
\begin{equation}
 P_{i_k,k}=\frac{1}{z}\exp\left[dg(k)\left(\vec{c}_{i_0} \cdot \vec{c}_{i_k}\right)\right],
 \label{pxboltzmann}
\end{equation}
where $z$ is the normalization constant for the reorientation probability.

If one required not only that the VACF was observed but also that the
autocorrelation function would decay similarly independently of the
start and end points, i.e. $\left\langle \vec{c}_{i_m} \cdot
  \vec{c}_{i_n}\right\rangle=\left\langle \vec{c}_{i_j} \cdot
  \vec{c}_{i_l}\right\rangle$, if $n-m=l-j$, the problem would be
similar with the exception that we would now have $\frac{k(k+1)}{2}$
constraints if the trajectory consists of $k$ time steps.
Analogously, the probabilites would then be given by
\begin{equation}
 P_{\Gamma}=\frac{1}{Z}\exp\left[\sum_{j=1}^k\sum_{m=j-1}^{k-1}dg(j-m)\left(\vec{c}_{i_m} \cdot \vec{c}_{i_j}\right)\right].
 \label{totcorrprobs}
\end{equation}

\subsection*{Model analysis and results}

The VACF can be easily calculated by using the properties of the
partition function $Z$. Given a distribution
$P(x)=\frac{1}{Z}\exp\left[-\beta H(x)\right]$, the expected value of
the function $H(x)$ is
\begin{equation}
 \left\langle H \right\rangle = -\frac{\partial}{\partial \beta} \ln Z.
 \label{ergyz}
\end{equation}

\begin{wrapfigure}{r}{0.4\textwidth}
\begin{center}
\includegraphics[width=70mm]{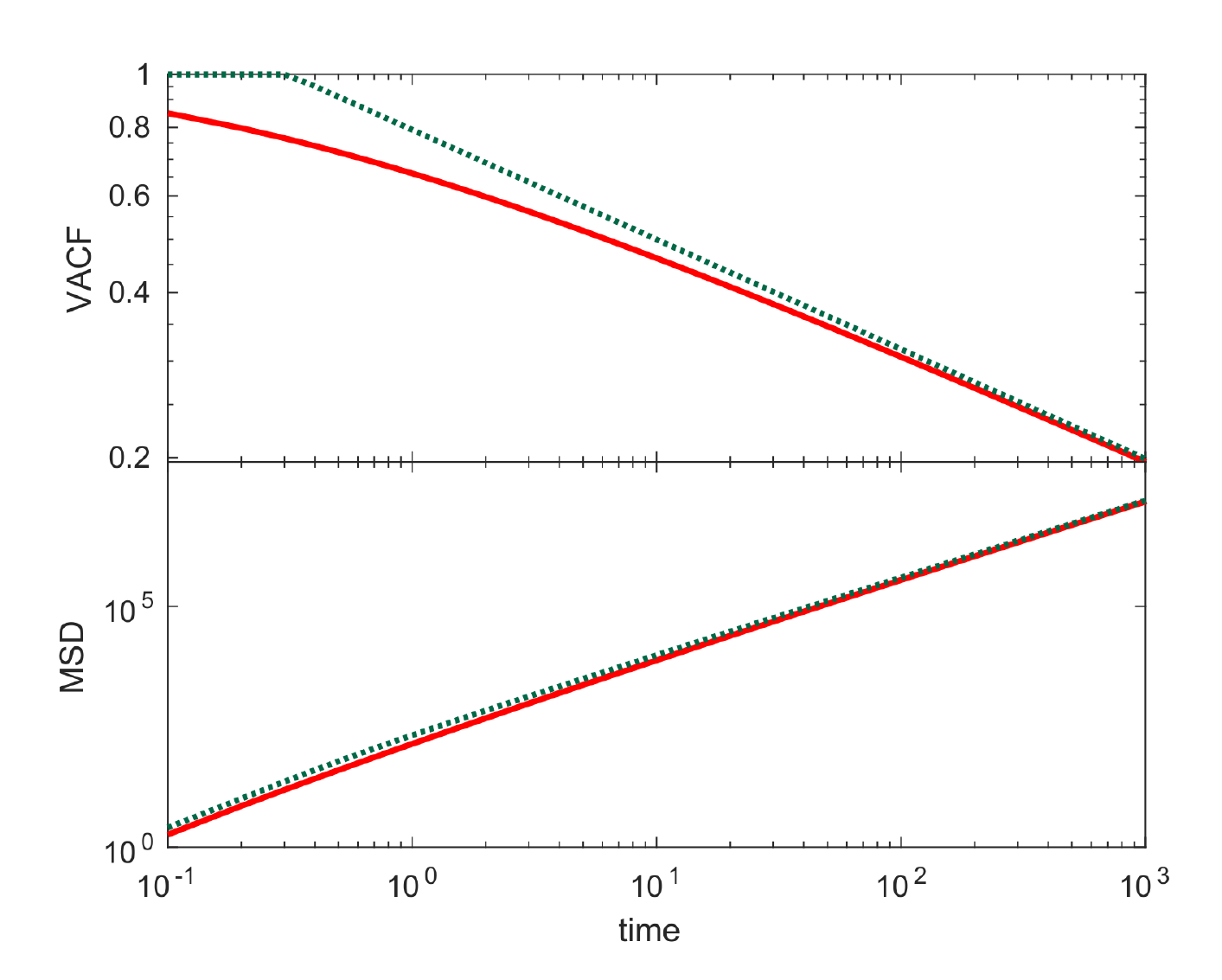}
\end{center}
\caption{Comparison between generalized and time-correlated random
  walk models.  Shown are theoretical results for VACF (top) and MSD
  (bottom) in the time-correlated (dotted green line) and generalized
  time-correlated (solid red line) models. The parameter values are
  $D_{rw}=1$, $v=16$, $C_0=0.5$, $\Delta=10$, $\phi=0.2$.}
\label{vmpowgen}
\end{wrapfigure}

Combining Eqs.~(\ref{lgca-corr}), (\ref{pxboltzmann}) and (\ref{ergyz})
the VACF is given by $g(k) = \frac{\partial}{\partial \beta(k)} \ln
z$, where $\beta(k)=\beta_0(k)$ in the case of of probabilities given
by Eq.~(\ref{totcorrprobs}).  In the case of a
2D square lattice the partition function can be easily calculated as
$z=2\left\{1+\cosh\left[\beta(k)\right]\right\}$.  Taking the
logarithm and differentiating we obtain
$g(k)=\frac{\sinh\left[\beta(k)\right]}{1+\cosh\left[\beta(k)\right]}=\tanh\left[\frac{\beta(k)}{2}\right]$.
If we consider power law correlations and take the limit $\tau
\rightarrow 0$ we obtain 
\begin{equation}
 g(t)=\tanh\left[ C_0\left( \frac{\Delta}{t} \right)^{\phi} \right].
 \label{decaytanpow}
\end{equation}
Eqs.~(\ref{anom_corr}) and (\ref{decaytanpow}), corresponding to the 
time-correlated random walk and generalized time-correlated random walk,
respectively, are visually compared
in Fig.~\ref{vmpowgen}.  A Taylor expansion of $g(t)$ around
$\left(\frac{\Delta}{t}\right)^{\phi}=0$ (i.e.\ valid for
$\Delta\rightarrow 0$ or $k\rightarrow \infty$) shows that up to
second order terms $g(t)\approx C_0\left( \frac{\Delta}{t}
\right)^{\phi}$, as expected. For
$\left(\frac{\Delta}{t}\right)^{\phi} \gg 0$ the VACF decays as a
power law as well (see Sec.~G in the Supplementary Information), $g(t) \approx
G(t)=\tanh(C_0)\left(\frac{\Delta}{t}\right)^{2C_0\phi\csch\left(2C_0\right)}$,
and the difference between both functions behaves as $g(t)-G(t)\propto
\left(\frac{\phi}{\Delta}\right)^2$.
When the probabilites are given by Eq.~(\ref{pxboltzmann}) particle
orientations are independent of one another, as it was the case in the
time-correlated model, and the MSD is given by
Eq.~(\ref{msdind}). Therefore, in the limit $\tau \rightarrow 0$ the MSD
is
\begin{equation}
\begin{split}
 \left\langle r_t^2 \right\rangle & =2dD_{rw}\left\{ t- 2\int_0^t \tanh^2 \left[ \frac{\beta(\tau)}{2} \right] \mathrm{d}\tau \right\} \\ &+ 2v^2 \int_0^t \int_{\tau}^t \tanh \left[ \frac{\beta(\tau)}{2} \right] \tanh \left[ \frac{\beta (k)}{2} \right] \mathrm{d}k \mathrm{d}\tau .
 \end{split}
 \label{contmsdexp}
\end{equation}
If we consider power law correlations and expand in Taylor series
around $\Delta^{\phi}=0$ up to second order, we recover
Eq.~(\ref{msdgenpow}) when $g(t)$ is given by
Eq. \ref{anom_corr}. Therefore the MSD in this model follows similar
regimes as those of the time-correlated model
Eq.~(\ref{msdpowintegrated}). Eqs.~(\ref{piecemsd}) and
(\ref{contmsdexp}) are visually compared in Fig.~\ref{vmpowgen}, and
see Fig.~\ref{theovssims}iii for a comparison of Eqs.~(\ref{decaytanpow}) and
(\ref{contmsdexp}) with LGCA simulations.
Details on the computational implementation are found in Sec. H of the Supplementary Information.

When the probabilities are given by Eq.~(\ref{totcorrprobs}) instead
the MSD is given by Eq.~(\ref{corrmsd}), as all orientation pairs $i$
and $j$ are correlated, where the correlation is given by $g(i-j)$
thus not depending specifically on the values of $i$ and $j$ but on
their difference only. Therefore, for a 2D square lattice and a
power-law decaying VACF in the limit $\tau \rightarrow 0$ the MSD is
given by
 \begin{equation}
 \left\langle r_t^2 \right\rangle =2dD_{rw}t+2v^2\int_{0}^{t}\left(t-\tau \right)\tanh\left[C_0\left(\frac{\Delta}{\tau}\right)^{\phi}\right]\mathrm{d}\tau.
\end{equation}
Again we see that this expression agrees with the formal solution of
the MSD for an overdamped Langevin equation with colored noise
\cite{ChKl12}. In this case, however, the noise correlation is not
decaying exponentially.  If we expand the hyperbolic tangent on the
right hand side around $\left(\frac{\Delta}{t}\right)^{\phi}=0$
(i.e.\ $\Delta \rightarrow 0$ or $t\rightarrow \infty$) up to the second
term and integrate we obtain the MSD as
 \begin{subequations}
\begin{align}
 \begin{split}
  \left\langle r_t^2 \right\rangle=& 2dD_{rw}t +\frac{2C_0}{1-\phi}\left(v\Delta\right)^2\left\{\frac{1}{2-\phi}\left[\left(\frac{\Delta}{t}\right)^{\phi-2}-1\right]+1-\frac{t}{\Delta}\right\}, \qquad \phi \neq 1, \phi \neq 2
 \end{split}
 \\
 \begin{split}
  \left\langle r_t^2 \right\rangle=& 2dD_{rw}t + 2C_0\left(v\Delta\right)^2\left\{\frac{t}{\Delta}\left[\ln\left(\frac{t}{\Delta}\right)-1\right]+1\right\}, \qquad \phi=1
 \end{split}
 \\
 \begin{split}
  \left\langle r_t^2 \right\rangle=& 2dD_{rw}t + 2C_0\left(v\Delta\right)^2\left[ \frac{t}{\Delta}-\ln\left(\frac{t}{\Delta}\right)-1 \right], \qquad \qquad \qquad \qquad \phi=2.
 \end{split}
\end{align}
\label{msdtotcorrintegrated}
\end{subequations}
We conclude from Eqs.~(\ref{msdtotcorrintegrated}) that
$$
\left\langle r_t^2 \right\rangle \sim t \pm t^{2-\phi}.
$$
When $\phi<1$ we have $2-\phi>1$ and the process is
superdiffusive. When $1<\phi<2$ we have $0<2-\phi<1$, at short times
the term $t^{2-\phi}$ dominates, and the process is subdiffusive while
at long times the linear term dominates yielding normal diffusion.
Finally, when $\phi>2$ we have $2-\phi<0$ and the process is
completely normal diffusive.

\section*{Summary and discussion}

The goal of our study was to design a simple model for a single particle moving
with memory in abscence of any environmental cue.
We chose a cellular automaton, specifically, an LGCA because it is a flexible and computationally efficient framework and has the potential to analyze collective behavior in populations of moving particles or cells.
After having introduced an LGCA model for unbiased random walk, we have
derived three different novel time-correlated LGCA models for single
particle migration. 

The subsequent persistent random walk LGCA model was derived from a biophysically\hyp{}motivated
Langevin equation for particle reorientation in an overdamping
environment. We showed that in this model the VACF decays
exponentially. Furthermore, we proved that a particle in this model
moves superdiffusively only at short times while it diffuses normally
in the long time limit. This behavior as well as the expression we
found for the MSD agree with that found by Othmer~\cite{expprevious}
using the telegrapher's equation and also to the one by Chechkin et al. \cite{ChKl12}
for exponential noise.

The time-correlated random walk model was derived by assuming that the specific form of the
VACF is known. We also assumed that reorientation probabilities were completely
independent, and that particles have no preference in turning left or
right. We considered the specific case of a power law\hyp{}decaying
VACF and showed that the MSD exhibits two transitions
when $g(t)\propto t^{-\frac{1}{2}}$ and $g(t)\propto t^{-1}$. For
small exponents the particle moves superdiffusively on every time
scale. At intermediate exponents there are non\hyp{}linear
contributions dominating at short times. For large exponents, all
non\hyp{}linear contributions vanish in the long time limit resulting
in normal diffusion.

Finally, we derived a generalized LGCA model by maximizing the diffusing particle's
path entropy while retaining the constraint of reproducing a certain
VACF. In this model the reorientation probability Eq.~(\ref{pxboltzmann})
is similar to the reorientation probability of the persistent random walk
Eq.~(\ref{lgcatrans}), with some differences: in
Eq.~(\ref{pxboltzmann}) the particle's orientation is compared to its
initial orientation while in Eq.~(\ref{lgcatrans}) the particle's
orientation is compared to the particle's orientation at the previous
time step.  Furthermore, in Eq.~(\ref{lgcatrans}) we have a constant
parameter $\beta$ while in Eq.~(\ref{pxboltzmann}) this parameter decays with time, i.e.\ $\beta \propto g(t)$. We recall that
Eq.~(\ref{lgcatrans}) results from considering a Langevin equation for
the particle's reorientation.  The parameter $\beta$ depends on the
magnitude of the reorienting force, the relaxation constant $\gamma$
(related to friction) and the angular diffusion constant $D_{\theta}$.
A time\hyp{}dependent parameter $\beta(k)$ would be obtained when
considering a generalized Langevin equation resulting from either a
time-dependent reorientation force or friction if these values
changed much more slowly than the time needed for the particle to be
displaced.  Taking this into account, Eq.~(\ref{lgcatrans}) describes
the movement of a particle when reorientations can be performed almost
instantaneously compared to the time required for the particle to move
in space.  Eq.~(\ref{pxboltzmann}) on the other hand describes
movement when the particle keeps moving but needs considerably more
time to change its initial orientation.  When in addition the VACF is
required to be invariant under time translations we showed that the
corresponding MSD time regimes match to those found by Chechkin et al.~\cite{ChKl12} for power law-correlated noise.  

We have verified our analytical results of all constructed models by
comparing them to LGCA computer simulations. In order to derive the analytical
VACF and MSD expressions, we have considered the limit $\tau \rightarrow 0$. 
In this limit, the macroscopic time $t$ remains small even after several time steps.
It stands to reason that, for $t\rightarrow \infty$, $\tau \ll t$, the difference
between our analytical expressions and simulations becomes negligible.

In their present form our LGCA models assume (i) the particle has
constant instantaneous speed $v$; (ii) the particle moves to a
neighboring site at every time step; and (iii) the particle moves on a
regular lattice, which impacts the specific expression of the VACF.
All these models could be extended by considering different instantaneous
speeds, as well as waiting times between subsequent displacements, by using multispeed LGCA
and adding rest (zero velocity) channels, respectively.  Effects of the
lattice regularity on the single particle movement can be compensated
by choosing the sensitivity $\beta$ appropriately in the persistent
random walk model as well as the crossover time $\Delta$ in the
generalized time-correlated model.  In the time-correlated model the
VACF does not depend on the lattice geometry.

Our new models could also be extended to account for external forces acting on the particle, independent from its intrinsic anomalous movement. 
For extending the models, we can consider that particle reorientations are caused by internal correlations of individual cells and by particle interactions.
The probability $P_{n^1,\ldots,n^N,k}^{corr}$ of having $N$ particles in a node with a certain orientation due to internal particle orientations has already been introduced in Eq.~(\ref{confprobs}), while individual particle reorientation probabilities $P_{n^{\ell},k}$ would be given according to one of the models introduced in this work. On the other hand, the probability $P_{n^1,\ldots,n^N,k}^{int}$ of having $N$ particles in a node with a certain orientation due to particle interactions would be a function of other particles' positions and orientations (see~\cite{langlgca}, for examples of such probabilities). If we assume that both probabilities are independent, then the reorientation probability for all particles would simply be $P_{n^1,\ldots,n^N,k}^{tot}=P_{n^1,\ldots,n^N,k}^{corr}\cdot P_{n^1,\ldots,n^N,k}^{int}$. 
 Such an extension could be useful for studying physical systems such as plasma gases~\cite{Bale05}. Furthermore, it would be interesting to construct LGCA models generating L\'evy walks exhibiting non-Gaussian probability density functions~\cite{KlSo11,ZDK15}. More importantly, and tracing back to the original motivation of our work, due to the computational efficiency of LGCAs, our schemes could be applied to model large groups of interacting cells to study the impact of persistence and time correlations in single-cell dynamics on collective phenomena. Highly promising examples are coordination and swarming in bacteria \cite{ARBPHB15}, pluripotent cells during early development \cite{BBG14}, and the emergence of phase transitions in collective cell migration~\cite{MeVi14,McKPL10,Malin17,BYMM16}. Moreover, the non-cellular microenvironment is crucial for cell migration phenomena. Recently, the impact of complex environments on cell dissemination has been studied with a cellular automaton model~\cite{talk17}. It would be interesting to extend the models introduced here to analyze the impact of anomalous dynamics and complex microenvironments on cell dissemination and cancer invasion. \\
The LGCA modeling framework followed in this work is characterized by simplifying the concept of a moving particle to movement in discrete time steps between discrete nodes on a regular lattice, possessing only a finite, discrete set of velocities. On one hand, this “discrete approach” is decidedly more simplified and abstract than “continuous approaches” such as continuous time random walks or fractional diffusion equations. On the other hand, as we have shown in the present work, the LGCA offers an advantage not only in computational efficiency and straightforward multiparticle extension, but also in ease of model analysis. This gets rapidly complex in the aforementioned continuous approaches~\cite{MeKl00,CKW04,MeKl04,KRS08,KlSo11,HoFr13,MJJCB14,ZDK15} but remains feasible in the LGCA, even when dealing with systems of interacting particles~\cite{lgcacarsten,lgca,glioma}. \\
In the era of “Big Data”, there is an abundance of biological data. Single or collective cell motility can be measured in vitro or in vivo via various experimental methods such as in vivo two-photon imaging~\cite{caha03} or cell cytometry~\cite{dAl07}, respectively. In this regard, there is a need for “data-driven” modeling frameworks. Our work comes timely to fulfill this scope by proposing the “data-driven” modeling of single particle superdiffusive behavior without prior knowledge of the mechanisms at work. Such an approach is vital for the study of phenomena whose driving mechanisms are currently unknown or challenging to model~\cite{Zienk15,Chris04,sisan12}.

\section*{Acknowledgements}
The authors thank the Centre for Information Services and High Performance Computing (ZIH) at TU Dresden for providing an excellent infrastructure. The authors acknowledge support by the German Research Foundation and the Open Access Publication Funds of the TU Dresden.The authors would like to thank Anja Vo\ss-B\"ohme, Lutz Brusch, Fabian Rost, Osvaldo Chara, Simon Syga, and Oleksandr Ostrenko for their helpful comments and fruitful discussions. Andreas Deutsch is grateful to the Deutsche Krebshilfe for support. Andreas Deutsch is
supported by the German Research Foundation (Deutsche
Forschungsgemeinschaft) within the projects SFB-TR 79 ``Materials for tissue
regeneration within systemically altered bones'' and Research Cluster of
Excellence ``Center for Advancing Electronics Dresden'' (cfaed). Haralampos Hatzikirou  would like to acknowledge the SYSMIFTA ERACoSysMed grant (031L0085B) for the financial support of this work and the German Federal Ministry of Education and Research within the Measures for the Establishment of Systems Medicine, project SYSIMIT (BMBF eMed project SYSIMIT, FKZ: 01ZX1308D). Josu\'e Manik Nava-Sede\~no is supported by the joint scolarship program DAAD-CONACYT-Regierungsstipendien (50017046) by the German Academic Exchange Service and the National Council on Science and Technology of Mexico.

\section*{Author contribution statement}
All authors formulated the mathematical model. J.M.N.S. performed the analysis and simulations. All authors interpreted results. J.M.N.S. wrote the manuscript with contributions from all authors. All authors read and approved the final manuscript.

\section*{Additional information}
The lattice-gas cellular automata code used in this study is available from the corresponding author on reasonable request.\\
\textbf{Competing financial interests} The authors declare no competing financial interests.

\pagebreak

\centerline{\Large \bf Supplementary information}

\setcounter{equation}{0}
\setcounter{figure}{0}
\setcounter{table}{0}
\setcounter{page}{1}
\renewcommand{\figurename}{Supplementary Figure}
\renewcommand{\tablename}{Supplementary Table}
\renewcommand{\theequation}{S\arabic{equation}}
\renewcommand{\thetable}{S\arabic{table}}
\renewcommand{\thefigure}{S\arabic{figure}}

\begin{appendix}

\section{MSD in correlated systems}
Here we derive the general expression for the MSD when there are
temporal correlations. We use the derivation by Sch\"uring \cite{Schu04}.  To start with, we consider the ``MSD'' as
defined by Shalchi \cite{msdderiv}
$$
\left\langle r_i r_j \right\rangle = \int_{t_0}^t \int_{t_0}^t \left\langle v_i(\tau)v_j(\xi) \right\rangle \mathrm{d}\xi \mathrm{d}\tau
$$
which in LGCA notation are written as
\begin{equation}
 \left\langle r_i r_j \right\rangle=\sum_{n=1}^k \sum_{m=1}^k v^2 \tau^2 \left\langle \left[\hat{e}_i\cdot \vec{c}_{i_n}\right] \left[\hat{e}_j\cdot \vec{c}_{i_m}\right] \right\rangle,
\end{equation}
where $\hat{e}_x$ and $\hat{e}_y$ are the two orthonormal unit vectors
in Cartesian coordinates.  In 2D the diagonal elements are given by:
\begin{equation}
\begin{split}
\left\langle r_x^2 \right\rangle & = \sum_{i=1}^k \sum_{j=1}^k v^2 \tau^2 \left\langle \cos\left(\theta_i\right)\cos\left(\theta_j\right) \right\rangle  \\ &
=\sum_{i=1}^k \sum_{j=1}^k \frac{v^2 \tau^2}{2} \left\langle \cos\left( \theta_i-\theta_j \right)+ \cos\left( \theta_i+\theta_j \right) \right\rangle
\end{split}
\end{equation}
and 
\begin{equation}
\begin{split}
\left\langle r_y^2 \right\rangle & = \sum_{i=1}^k \sum_{j=1}^k v^2 \tau^2 \left\langle \sin\left(\theta_i\right)\sin\left(\theta_j\right) \right\rangle  \\ &
=\sum_{i=1}^k \sum_{j=1}^k \frac{v^2 \tau^2}{2} \left\langle \cos\left( \theta_i-\theta_j \right)- \cos\left( \theta_i+\theta_j \right) \right\rangle,
\end{split}
\end{equation}
where $\theta_k=\mathrm{arg}\left[ \vec{c}_{i_k} \right]$.
Adding them up gives the MSD
\begin{equation}
 \left\langle r^2 \right\rangle=\left\langle r_x^2 \right\rangle+\left\langle r_y^2 \right\rangle = \sum_{i=1}^k \sum_{j=1}^k v^2 \tau^2 \left\langle \cos\left( \theta_i-\theta_j \right) \right\rangle,
\end{equation}
which is just a Taylor-Green-Kubo formula \cite{KlKo02,KKCSG06}.  When
$i=j$ we have $\cos\left( \theta_i-\theta_j \right)=1$, so by taking
these terms out of the sum we get
\begin{equation}
 \left\langle r^2 \right\rangle= kv^2\tau^2 + \sum_{i=1}^k \sum_{j=1}^k v^2 \tau^2 \left\langle \cos\left( \theta_i-\theta_j \right) \right\rangle (1-\delta_{ij}).
\end{equation}
On the one hand $v^2\tau^2=\varepsilon^2$ and on the other $\varepsilon^2=2dD\tau$, so using these relations on the first term on the right hand side yields
\begin{equation}
 \left\langle r^2 \right\rangle= 2dDk\tau + \sum_{i=1}^k \sum_{j=1}^k v^2 \tau^2 \left\langle \cos\left( \theta_i-\theta_j \right) \right\rangle (1-\delta_{ij}).
 \label{markmsdeq}
\end{equation}
Because the cosine is an even function $\cos\left( \theta_i-\theta_j
\right)=\cos\left( \theta_j-\theta_i \right)$, which means that we are
adding two identical terms for every $i\neq j$ (this condition is
already satisfied due to the Kronecker delta).  This situation
allows us to rewrite the limits of the interior sum if we take care
of counting each term twice.  Furthermore, using trigonometric
identities it is possible to expand the cosine on the right hand
side,
\begin{equation}
\begin{split}
 \left\langle r^2 \right\rangle & = 2dDk\tau + 2\sum_{i=1}^k \sum_{j=i}^k v^2 \tau^2 \left\langle \cos\left( \theta_i\right)\cos \left(\theta_j \right) \right\rangle (1-\delta_{ij})  \\
 & + 2\sum_{i=1}^k \sum_{j=i}^k v^2 \tau^2 \left\langle \sin\left( \theta_i \right)\sin\left(\theta_j \right) \right\rangle (1-\delta_{ij}).
\end{split}
 \end{equation}
 If the orientations of the particle are completely uncorrelated from one
 another, then it is possible to write the sums on the right as
 \begin{equation}
\begin{split}
 \left\langle r^2 \right\rangle & = 2dDk\tau + 2\sum_{i=1}^k \sum_{j=i}^k v^2 \tau^2 \left\langle \cos\left( \theta_i\right)\right\rangle\left\langle\cos \left(\theta_j \right) \right\rangle (1-\delta_{ij})  \\
 & + 2\sum_{i=1}^k \sum_{j=i}^k v^2 \tau^2 \left\langle \sin\left( \theta_i \right)\right\rangle\left\langle\sin\left(\theta_j \right) \right\rangle (1-\delta_{ij}).
\end{split}
 \end{equation}
 If the reorientation probabilities are even functions of the angle
 $\theta_k$, then we have that $\left\langle \sin\left( \theta_i
   \right)\right\rangle=0$, and so the second sum on the right hand
 side disappears leaving
\begin{equation}
 \left\langle r^2 \right\rangle = 2dDk\tau + 2\sum_{i=1}^k \sum_{j=i}^k v^2 \tau^2 \left\langle \cos\left( \theta_i\right)\right\rangle\left\langle\cos \left(\theta_j \right) \right\rangle (1-\delta_{ij}).
\end{equation}
We can choose our coordinate system such that $\theta_0=0$ without
loss of generality. With this choice of the coordinate system, we can
rewrite the MSD as
\begin{equation}
 \left\langle r^2 \right\rangle = 2dDk\tau + 2\sum_{i=1}^k \sum_{j=i}^k v^2 \tau^2 \left\langle \cos\left( \theta_i-\theta_0\right)\right\rangle\left\langle\cos \left(\theta_j -\theta_0 \right) \right\rangle (1-\delta_{ij}).
\end{equation}
 Using the definition of the VACF this becomes
 \begin{equation}
 \left\langle r^2 \right\rangle = 2dDk\tau + 2\sum_{i=1}^k \sum_{j=i}^k v^2 \tau^2 g(i)g(j) (1-\delta_{ij}).
\end{equation}
Expanding the difference on the right hand side we get
 \begin{equation}
 \left\langle r^2 \right\rangle = 2dDk\tau + 2\sum_{i=1}^k \sum_{j=i}^k v^2 \tau^2 g(i)g(j)-2\sum_{i=1}^k \sum_{j=i}^k v^2 \tau^2 g(i)g(j)\delta_{ij}.
\end{equation}
The last sum on the right hand side can be simplified, so we get
 \begin{equation}
 \left\langle r^2 \right\rangle = 2dDk\tau + 2\sum_{i=1}^k \sum_{j=i}^k v^2 \tau^2 g(i)g(j)-2\sum_{i=1}^k v^2 \tau^2 g^2(i).
\end{equation}
Finally, using the relation between the intantaneous particle velocity
$v$ and the diffusion constant in the random walk limit $D_{rw}$, and
reordering terms, we obtain
 \begin{equation}
 \left\langle r^2 \right\rangle = 2dD_{rw}\left[k\tau-2\sum_{i=1}^k g^2(i) \tau \right] + 2 v^2 \sum_{i=1}^k \sum_{j=i}^k g(i)g(j)\tau^2.
 \label{memorymsd}
\end{equation}
 
\section{VACF and MSD derivation in the persistent random walk}

First we analytically derive the expected form of the VACF for a
single particle in an LGCA where the lattice is a 2D square lattice.
\subsection*{VACF}
\label{vacfmark}
As mentioned before, the orientation probability is given by
Eq.~(\ref{lgcatrans}).  The VACF is formally defined as
$g(t)=\left\langle \vec{v_0} \cdot \vec{v_t} \right\rangle$, where
$v_0$ and $v_t$ are the velocities of the particle at time 0 and $t$,
respectively.  Using this definition, we can calculate the VACF of a
stochastically moving particle as $$g(t)=\int P(\vec{v},t)\left(
  \vec{v_0} \cdot \vec{v}\right) \mathrm{d}\vec{v},$$ where
$P(\vec{v},t)$ is the probability of the particle having a velocity
$\vec{v}$ at time $t$.

In an LGCA particle velocities are given by the velocity channels they are
located in, which belong to a finite set of unit vectors depending on
the lattice dimension and geometry.  Furthermore, time is also
discrete with time steps of length $\tau$ such that at time step $k$
time has elapsed by $k\tau$. We can then rewrite the definition of the
velocity autocorrelation in the case of an LGCA in the following way
\cite{KlKo02}:
\begin{equation}
 g(t)=\left\langle \vec{v_0} \cdot \vec{v_t}\right\rangle=\sum_{i=1}^{b}P_{i_k,k}\left[ \vec{c}_{i_0} \cdot \vec{c}_{i_k}\right],
\end{equation}
where $\vec{c}_{i_0}$ is the orientation of the particle at time step
$k=0$ and $\vec{c}_{i_k}$ is the orientation of the particle at time step
$k$.  To calculate the VACF, we start by defining a function as:
\begin{equation}
 H=-\vec{c}_{i_{k-1}} \cdot\vec{c}_{i_k},
\end{equation}
where $\vec{c}_{i_k}$ is the particle orientation at time step $k$. Having defined this function, we can rewrite Eq.~(\ref{lgcatrans}) as follows:
\begin{equation}
 P_{i_k,k}=\frac{e^{-\beta H\left( \vec{c}_{i_k},k \right)}}{Z},
\end{equation}
where the partition function is defined as
\begin{equation}
 Z=\sum_{i_k}e^{-\beta H\left( \vec{c}_{i_k},k \right)}.
\end{equation}
The expected value of the function is given by
\begin{equation}
 \left\langle H\right\rangle=\left\langle-\vec{c}_{i_{k-1}} \cdot\vec{c}_{i_k}\right\rangle,
\end{equation}
that is, the energy of the system is the single\hyp{}step correlation.
Due to the distribution of the reorientation probabilites the total
energy can be calculated by the well\hyp{}known relation
\begin{equation}
 \left\langle H\right\rangle=-\frac{\partial}{\partial \beta} \ln Z.
\end{equation}
Using the last two equations, we get an expression for the single step
correlation:
\begin{equation}
 \left\langle\vec{c}_{i_{k-1}}\cdot \vec{c}_{i_k}\right\rangle=\frac{\partial}{\partial \beta} \ln Z.
 \label{emark}
\end{equation}
In this single\hyp{}particle model, the partition function can be easily
calculated. For a 2D square lattice the partition function reads
\begin{equation}
 Z=2\left[1+\cosh(\beta)\right].
 \label{zm2d}
\end{equation}
Substituting Supplementary Eq.~(\ref{zm2d}) into Supplementary Eq.~(\ref{emark}), we have
\begin{equation}
\begin{split}
\left\langle\vec{c}_{i_{k-1}}\cdot \vec{c}_{i_k}\right\rangle & =\frac{\partial}{\partial \beta}\left\{ \ln(2)+\ln\left[1+\cosh(\beta)\right] \right\} \\ & = \frac{\sinh(\beta)}{1+\cosh(\beta)}=\tanh\left(\frac{\beta}{2}\right).
\end{split}
\label{ssmtanh}
\end{equation}
The particle orientations $\vec{c}_{i}$ are normalized vectors. This
allows us to rewrite the single step correlation as
\begin{equation}
\left\langle\vec{c}_{i_{k-1}}\cdot \vec{c}_{i_k}\right\rangle=\left\langle\cos\left( \theta_k - \theta_{k-1} \right)\right\rangle,
\label{ssang}
\end{equation}
and the VACF as
\begin{equation}
 g(k)=\left\langle \cos\left( \theta_k - \theta_0 \right)\right\rangle,
 \label{vacfang}
\end{equation}
where $\theta_k=\mathrm{arg}\left[
  \vec{c}_k\right]$. Supplemenary Eq.~(\ref{vacfang}) can be rewritten by adding
zeroes in the following way:
$$
 g(k)=\left\langle\cos\left( \theta_k - \theta_0 \right)\right\rangle=\left\langle \cos\left[ \theta_k - \theta_0 + \sum_{i=1}^{k-1}\left( \theta_i-\theta_i \right)\right]\right\rangle,
$$
which after rearranging terms, has the form
\begin{equation}
 g(k)=\left\langle \cos\left(\sum_{i=1}^k \theta_i-\theta_{i-1} \right)  \right\rangle.
\end{equation}
Using trigonometric identities and the linearity of the expected value
operator we can expand this expression to
\begin{equation}
\begin{split}
 g(k) & =\left\langle \prod_{i=1}^k \cos\left( \theta_i-\theta_{i-1}\right) \right\rangle \\ & +f \left\{ \left\langle \cos \left( \theta_n-\theta_{n-1} \right) \sin \left( \theta_m-\theta_{m-1} \right) \right\rangle \right\},
 \end{split}
\end{equation}
where $f$ is a sum of expected values of products of sines and
cosines. Because the model is Markovian, the $i$-th orientation is
only correlated with the $(i-1)$-th orientation. This allows us to
write
\begin{equation}
\begin{split}
 g(k) & = \prod_{i=1}^k \left\langle \cos\left( \theta_i-\theta_{i-1}\right)\right\rangle  \\ & +f \left\{ \left\langle \cos \left( \theta_n-\theta_{n-1} \right) \right\rangle \left\langle \sin \left( \theta_m-\theta_{m-1} \right)\right\rangle \right\}.
 \end{split}
\end{equation}
Now, because the reorientation probabilities Eq.~(\ref{lgcatrans}) are
even functions with respect to $\theta_k=\mathrm{arg}\left[ \vec{c}(k)
\right]$, the expected values become
$$
\left\langle \sin \left( \theta_m-\theta_{m-1} \right)\right\rangle=0,
$$
which in turn implies
$$
f \left\{ \left\langle \cos \left( \theta_n-\theta_{n-1} \right) \right\rangle \left\langle \sin \left( \theta_m-\theta_{m-1} \right)\right\rangle \right\}=0.
$$
Using these relations we find that the VACF is given by
\begin{equation}
 g(k) = \prod_{i=1}^k \left\langle \cos\left( \theta_i-\theta_{i-1}\right)\right\rangle.
 \label{almostmvacf}
\end{equation}
Using Supplementary Eqs.~(\ref{ssmtanh}) and (\ref{ssang}) in Supplementary Eq.~(\ref{almostmvacf}) yields
\begin{equation}
 g(k)=\left[ \tanh \left( \frac{\beta}{2}\right) \right]^k,
\end{equation}
which can be written as
\begin{equation}
 g(k)=e^{\alpha k},
\end{equation}
if we define the exponent $\alpha$ as
\begin{equation}
 \alpha=\ln \left[ \tanh \left( \frac{\beta}{2} \right) \right].
 \label{gammsqmark}
\end{equation}
The exponent $\alpha$ depends on the lattice dimension and geometry, as follows:
\begin{itemize}
\item In 1D the exponent is given by:
\begin{equation}
 \alpha=\ln\left[ \tanh \left( \beta \right)\right].
\end{equation}
\item In 2D with a triangular lattice the exponent is:
\begin{equation}
 \alpha = \ln \left( \frac{e^{\frac{\beta}{2}}-e^{-\beta}}{e^{-\beta}+2e^{\frac{\beta}{2}}} \right).
\end{equation}
\item In 2D with a square lattice the exponent is given by:
\begin{equation}
 \alpha=\ln \left[ \tanh \left( \frac{\beta}{2} \right)\right].
\end{equation}
\item In 2D with an hexagonal lattice it takes the form:
\begin{equation}
 \alpha=\ln\left[ \frac{2\sinh\left(\frac{3\beta}{4}\right)\cosh\left(\frac{\beta}{4}\right)}{\cosh(\beta)+2\cosh\left(\frac{\beta}{2}\right)} \right].
\end{equation}
\item With a cubic 3D lattice the exponent reads:
\begin{equation}
 \alpha=\ln \left[\frac{\sinh(\beta)}{\cosh(\beta)+2}\right].
\end{equation}
\end{itemize}

\subsection*{Mean square displacement}
To calculate the MSD of particles performing persistent random walks,
we start with Supplementary Eqs.~(\ref{markmsdeq}) and (\ref{ssang}) and rewrite the
sum limits taking into account that the cosine is an even function to
obtain
\begin{equation}
 \left\langle r^2 \right\rangle = 2dDk\tau + 2v^2\sum_{n=1}^k\sum_{m=n}^k \left\langle \vec{c}_{i_n} \cdot \vec{c}_{i_m} \right\rangle \tau^2 (1-\delta_{nm}).
\end{equation}
The expected value on the right hand side is the $(m-n)$-step correlation. From the previous VACF calculation we know that in the Markovian model
\begin{equation}
 \left\langle \vec{c}_{i_n} \cdot \vec{c}_{i_m} \right\rangle \coloneqq g(m-n) =\left[ \tanh \left( \frac{\beta}{2} \right) \right]^{\mid m-n \mid},
\end{equation}
which can be substituted in the expression of the MSD to obtain
\begin{equation}
 \left\langle r^2 \right\rangle = 2dDk\tau + 2v^2\sum_{n=1}^k\sum_{m=n}^k g(m-n) \tau^2 (1-\delta_{nm}).
\end{equation}
Because the sums on the right hand side only depend on the interval
length $\mid n-m \mid$ and not on the specific values of the indices
$n$ and $m$ we can replace both sums by a sum over all posible
interval lengths.  There are $k-j$ ways to divide an interval of $k$
time steps (because the sums start from $n=1$) into intervals of size
$j$. Taking all into account, the MSD becomes
\begin{equation}
 \left\langle r^2 \right\rangle = 2dDk\tau + 2v^2\sum_{j=1}^k (k-j) g(j) \tau^2,
\end{equation}
which can also be written as
$$
\left\langle r^2 \right\rangle=2dDk\tau+2v^2\sum_{j=1}^{k}\left(k-j\right)e^{j\alpha}\tau^2,
$$
where $\alpha$ is given by Supplementary Eq.~(\ref{gammsqmark}).  We now distribute
the two multiplying time steps $\tau$ on the second term on the right
hand side, and multiply by one the exponent of the exponential
function thus leaving it unchanged:
$$
\left\langle r^2 \right\rangle=2dDk\tau+2v^2\sum_{j=1}^{k}\left(k\tau-j\tau\right)e^{\frac{\alpha}{\tau}j\tau}\tau.
$$
We now use the definitions of the diffusion coefficient and the particle speed to obtain the following expression for the time step length:
$$
\tau=\frac{2dD}{v^2},
$$
and use it to substitute for $\tau$ on the denominator of the exponent
$$
\left\langle r^2 \right\rangle=2dDk\tau+2v^2\sum_{j=1}^{k}\left(k\tau-j\tau\right)e^{\frac{v^2\alpha}{2dD}j\tau}\tau.
$$

\section{VACF in homogeneous Markovian models}
We will now consider a general Markovian model for a single moving particle. The model is then a Markov chain of particle orientations, i.e.\ the particle can transition between different orientations at each time step. 
\begin{definit}
 The state space of the Markov chain is $\mathcal{E}=\left\{\vec{c}_0,\vec{c}_{\pm1}, \cdots, \vec{c}_{\pm n}, \cdots , \vec{c}_{N}\right\}$, where the $2N$ different states are given by
 \begin{equation*}
  \begin{split}
   & \vec{c}_n=\left(\cos\left(\frac{\pi}{n}\right),\sin\left(\frac{\pi}{n}\right)\right), \ n=1,\cdots,N-1, \\
   & \vec{c}_0=(1,0), \\
   & \vec{c}_N=(-1,0).
  \end{split}
 \end{equation*}
\end{definit}
\begin{definit}
 The state space subset $\mathcal{E}_0$ is defined as
 \begin{equation*}
  \mathcal{E}_0\coloneqq \left\{\vec{c}_{\pm1},\cdots,\vec{c}_{\pm(N-1)}\right\}.
 \end{equation*}
\end{definit}
If the space is isotropic, then it is reasonable to require that the
probability of the particle turning left or right be
identical. Furthermore, we assume that the probability of turning does
not depend on the specific time step, i.e.\ that the Markov process is
homogeneous.
\begin{definit}
 The Markov chain is the stochastic process $\left\{X(k):k\in \mathbb{N}\right\}$ where the reorientation probabilities are given by
 \begin{equation*}
  P\left(X(k+1)=\vec{c}_m\mid X(k)=\vec{c}_n\right)\coloneqq P\left( \theta \coloneqq\mathrm{arg}(\vec{c}_n,\vec{c}_m)\right), 
 \end{equation*}
where $P(\theta)=P(-\theta)$ for $0<\mid\theta\mid<\pi$, and the initial condition $X(0)=\vec{c}_0$.
\end{definit}
We will use the following shorthand notation: $P(0)\coloneqq p_0$, $P(\frac{\pi}{n})=P(-\frac{\pi}{n})\coloneqq p_n$, and $P(\pi) \coloneqq p_N$.
\begin{definit}
 The rotation matrix $\mathbf{A_n}$ is given by
 \begin{equation*}
  \mathbf{A_n}=\begin{pmatrix}
                \cos\left(\frac{\pi}{n}\right) & -\sin\left(\frac{\pi}{n}\right) \\
                \sin\left(\frac{\pi}{n}\right) & \cos\left(\frac{\pi}{n}\right)
               \end{pmatrix},
 \end{equation*}
 such that
 \begin{equation*}
  \mathbf{A_n}\left(\!\begin{array}{c}
               1 \\
               0
              \end{array}
              \!\right)=\left(\!\begin{array}{c}
               \cos\left(\frac{\pi}{n}\right) \\
               \sin\left(\frac{\pi}{n}\right)
              \end{array}
              \!\right)=\vec{c}_n
 \end{equation*}
and 
\begin{equation*}
 \mathbf{A_n}\left(\!\begin{array}{c}
               \cos(\phi) \\
               \sin(\phi)
              \end{array}
              \!\right)=\left(\!\begin{array}{c}
               \cos\left(\frac{\pi}{n}\right)\cos(\phi)-\sin\left(\frac{\pi}{n}\right)\sin(\phi) \\
               \sin\left(\frac{\pi}{n}\right)\cos(\phi)+\cos\left(\frac{\pi}{n}\right)\sin(\phi)
              \end{array}
              \!\right)=\left(\!\begin{array}{c}
               \cos(\frac{\pi}{n}+\phi) \\
               \sin(\frac{\pi}{n}+\phi)
              \end{array}
              \!\right).
\end{equation*}
\end{definit}

\begin{definit}
 The velocity autocorrelation function (VACF) is given by
 \begin{equation*}
  g_k=\left\langle X(0)\cdot X(k)\right\rangle=\sum_{\vec{v}\in \mathcal{E}}\left(\vec{c}_0 \cdot\vec{v}\right)P^k\left(\vec{v}\right)
 \end{equation*}
\end{definit}
\begin{theo}
 The velocity autocorrelation function of a particle whose orientations are given by a homogeneous, symmetric Markov chain is either delta-correlated, i.e. $g_k=\delta_{0,k}$, where $\delta$ is the Kronecker delta; alternating, i.e. $g_k=(-1)^ka^k$, $a\in\mathbb{R}^+$; or exponentially decaying, i.e. $g_k=e^{\alpha k}$, $\alpha\leq0$.
\end{theo}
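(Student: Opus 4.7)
The plan is to show that the VACF satisfies the one-step recursion $g_{k+1} = a\, g_k$ with $a = g_1$, so that $g_k = a^k$, and then to read off the three cases from the sign of $a \in [-1,1]$.

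First I would exploit the left–right symmetry of the reorientation probabilities together with the rotational equivalence of the states. The key lemma is: for any $\vec{v}\in\mathcal{E}$,
\begin{equation*}
\mathbb{E}\!\left[X(k+1)\,\middle|\, X(k)=\vec{v}\right] \;=\; a\,\vec{v},
\qquad
a \;=\; p_0 \;+\; 2\sum_{n=1}^{N-1} p_n \cos\!\tfrac{\pi}{n} \;-\; p_N .
\end{equation*}
To prove this, I would use the rotation matrices $\mathbf{A}_n$ to express every target state $\vec{c}_m$ reachable from $\vec{v}$ as $\mathbf{A}_\theta\vec{v}$, where $\theta$ is the turning angle. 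Because the transition kernel depends only on $\theta$ (homogeneity) and is invariant under $\theta\mapsto -\theta$ (symmetry), the sine-components of the contributions to $\mathbb{E}[X(k+1)\mid X(k)=\vec{v}]$ cancel pairwise, leaving only a scalar multiple of $\vec{v}$ whose magnitude is precisely the one-step expected cosine $a$.

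Second, I would apply the tower property together with the identity just established:
\begin{equation*}
g_{k+1}
\;=\; \mathbb{E}\!\left[\vec{c}_0\cdot X(k+1)\right]
\;=\; \mathbb{E}\!\left[\vec{c}_0\cdot \mathbb{E}[X(k+1)\mid X(k)]\right]
\;=\; a\,\mathbb{E}\!\left[\vec{c}_0\cdot X(k)\right]
\;=\; a\,g_k .
\end{equation*}
Combined with the initial condition $g_0 = \vec{c}_0\cdot\vec{c}_0 = 1$, this yields $g_k = a^k$ for all $k\in\mathbb{N}$. Since $a$ is a convex combination of cosines, $a\in[-1,1]$, and the three cases of the theorem follow immediately: $a=0$ gives the Kronecker delta; $a>0$ gives $g_k = e^{\alpha k}$ with $\alpha=\ln a\le 0$; $a<0$ gives $g_k=(-1)^k|a|^k$ with $|a|>0$.

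The main obstacle will be the clean justification of the conditional-expectation lemma. One has to argue that the transition kernel really is a function of turning angle alone, which requires rewriting the law $P(X(k+1)=\vec{c}_m\mid X(k)=\vec{v})$ in terms of $\mathrm{arg}(\vec{v},\vec{c}_m)$ and then invoking $P(\theta)=P(-\theta)$ to cancel the perpendicular components; handling the boundary angles $\theta=0$ and $\theta=\pi$ (which pair with themselves) cleanly is a minor but necessary bookkeeping detail. Once that geometric identity is in place, the rest of the argument is a short induction.
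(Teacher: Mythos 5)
Your proposal is correct and follows essentially the same route as the paper: the paper's proof also establishes, via the rotation matrices $\mathbf{A}_n$ and the pairwise cancellation of sine components under $P(\theta)=P(-\theta)$, that $\sum_{\vec{v}}(\vec{c}_0\cdot\vec{v})P(\vec{v}\mid\vec{u})=a\,(\vec{c}_0\cdot\vec{u})$ with $a=p_0-p_N+2\sum_{n=1}^{N-1}p_n\cos\frac{\pi}{n}$, which is exactly your conditional-expectation lemma, and then iterates it (via Chapman--Kolmogorov rather than the tower property, a purely cosmetic difference) to get $g_k=a^k$ and the same three-case analysis on $a\in[-1,1]$.
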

 \begin{proof}
  The proof is by induction.
  \begin{equation*}
   \begin{split}
    g_1 = & \sum_{\vec{v}\in\mathcal{E}}(\vec{c}_0\cdot\vec{v})P(\vec{v})=\sum_{\vec{v}=\vec{c}_0,\vec{c}_N}(\vec{c}_0\cdot\vec{v})P(\vec{v})+\sum_{\vec{v}\in\mathcal{E}_0}(\vec{c}_0\cdot\vec{v})P(\vec{v})
    =P(0)-P(\pi)+\sum_{i=1}^{N-1}\left[\left(\vec{c}_0\cdot\vec{c}_i\right)P\left(\frac{\pi}{i}\right)+\left(\vec{c}_0\cdot\vec{c}_{-i}\right)P\left(-\frac{\pi}{i}\right)\right] \\ &
    p_0-p_N+\sum_{i=1}^{N-1} \left[\cos\left(\frac{\pi}{i}\right)p_i+\cos\left(-\frac{\pi}{i}\right)p_i\right]=p_0-p_N+2\sum_{i=1}^{N-1}\cos\left(\frac{\pi}{i}\right)p_i\coloneqq a.
   \end{split}
  \end{equation*}
  Using the Chapman-Kolmogorov equation, we can calculate the VACF at
  the time step $k+1$
  \begin{equation*}
    g_{k+1}=\sum_{\vec{v}\in\mathcal{E}}(\vec{c}_0\cdot\vec{v})P^{k+1}(\vec{v})=\sum_{\vec{v}\in\mathcal{E}}(\vec{c}_0\cdot\vec{v})\sum_{\vec{u}\in\mathcal{E}}P^k(\vec{u})P(\vec{v}\mid\vec{u})
    =\sum_{\vec{u}\in\mathcal{E}}P^k(\vec{u})\sum_{\vec{v}\in\mathcal{E}}(\vec{c}_0\cdot\vec{v})P(\vec{v}\mid\vec{u}).
  \end{equation*}
  We now expand the second sum on the right hand side of the equation
  \begin{equation*}
   \begin{split}
    \sum_{\vec{v}\in\mathcal{E}}(\vec{c}_0\cdot\vec{v})P(\vec{v}\mid\vec{u}) = & 
    (\vec{c}_0\cdot\vec{u})p_0-(\vec{c}_0\cdot\vec{u})p_N+\sum_{i=1}^{N-1}\left[\left(\vec{c}_0\cdot\mathbf{A_i}\vec{u}\right)P\left(\mathbf{A_i}\vec{u}\mid\vec{u}\right)+\left(\vec{c}_0\cdot\mathbf{A_{-i}}\vec{u}\right)P\left(\mathbf{A_{-i}}\vec{u}\mid\vec{u}\right)\right] \\ &
    =(\vec{c}_0\cdot\vec{u})p_0-(\vec{c}_0\cdot\vec{u})p_N+\sum_{i=1}^{N-1}\left[\left(\!\begin{array}{c}
               \cos\left(\frac{\pi}{i}\right) \\
               -\sin\left(\frac{\pi}{i}\right)
              \end{array}
              \!\right)^T\vec{u}p_i+\left(\!\begin{array}{c}
               \cos\left(\frac{\pi}{i}\right) \\
               \sin\left(\frac{\pi}{i}\right)
              \end{array}
              \!\right)^T\vec{u}p_i\right] \\ &
              =(\vec{c}_0\cdot\vec{u})p_0-(\vec{c}_0\cdot\vec{u})p_N+\sum_{i=1}^{N-1}\left(\!\begin{array}{c}
               2\cos\left(\frac{\pi}{i}\right) \\
               0
              \end{array}
              \!\right)^T\vec{u}p_i \\ &
              =(\vec{c}_0\cdot\vec{u})p_0-(\vec{c}_0\cdot\vec{u})p_N+2\sum_{i=1}^{N-1}\cos\left(\frac{\pi}{i}\right)(\vec{c}_0\cdot\vec{u})p_i \\ &
              =(\vec{c}_0\cdot\vec{u})\left[p_0-p_N+2\sum_{i=1}^{N-1}\cos\left(\frac{\pi}{i}\right)p_i\right]=(\vec{c}_0\cdot\vec{u})a.
   \end{split}
  \end{equation*}
Inserting this expression back into the VACF yields
\begin{equation*}
 g_{k+1}=\sum_{\vec{u}\in\mathcal{E}}P^k(\vec{u})\sum_{\vec{v}\in\mathcal{E}}(\vec{c}_0\cdot\vec{v})P(\vec{v}\mid\vec{u})=\sum_{\vec{u}\in\mathcal{E}}P^k(\vec{u})(\vec{c}_0\cdot\vec{u})a=g_ka=a^ka=a^{(k+1)}.
\end{equation*}
We can rewrite $a$ as $a=\sum_{\theta}p(\theta)\cos \theta$, where $\theta=\mathrm{arg}(\vec{c}_0,\vec{c}_n)$, $\forall \vec{c}_n \in \mathcal{E}$. Using the fact that $0\leq p(\theta)\leq1$ and $\sum_{\theta}p(\theta)=1$, we have
\begin{equation*}
 -1\leq\cos(\theta)\leq1\implies -p(\theta)\leq p(\theta)\cos(\theta)\leq p(\theta) \implies -1\leq\sum_\theta p(\theta)\cos \theta\leq1 \therefore -1\leq a\leq 1.
\end{equation*}
 We have three cases:
 \begin{itemize}
  \item $-1\leq a <0$, then $a=-1\mid a\mid$ and $g_k=a^k=(-1)^k\mid a\mid^k$.
  \item $a=0$, then $g_k=a^k=0$, $k\neq0$.
  \item $0<a\leq1$ then $g_k=a^k=e^{k\ln(a)}=e^{\alpha k}$, where $\alpha=\ln(a)$. $0<a\leq1\implies-\infty<\alpha\leq0$.
 \end{itemize}
 \end{proof}

%
%
%

\section{Time correlated random walk: rule derivation for different dimensions and geometries}

\subsection*{One dimension}
We will now sketch our method for obtaining the reorientation
probabilities $P_{i_k,k}$ in 1D.  We start by expanding $g(k)$ for the
first two time steps after $k\tau=t\geq \Delta$ (see
Eq.~(\ref{anom_corr})) for a 1D lattice. We will denote by the
subscript $f$ the lattice direction parallel to the original
orientation of the particle. Similarly, the subscript $r$ denotes the
direction opposite to the original orientation of the particle.  Numerical
subscripts denote the time step at which the reorientation probability is
evaluated.
\paragraph{Time step $k=1$}
Only two trajectories are possible after one time step. Their
probabilities are given by $P_{f,1}$ and $P_{r,1}$. The normalization
condition for these probabilties reads
\begin{equation}
 P_{f,1}+P_{r,1}=1.
 \label{norm1d1k}
\end{equation}
We now expand the VACF:
\begin{equation}
 \sum_{i=1}^{1}P_{i_k,k}\left[ \vec{c}_{i_0} \cdot \vec{c}_{i_k}\right]=P_{f,1}-P_{r,1}=g(1).
 \label{c1d1k}
\end{equation}
We can substitute $P_{r,1}$ from Supplementary Eq.~(\ref{norm1d1k}) into Supplementary Eq.~(\ref{c1d1k})
$$
P_{f,1}-P_{r,1}=P_{f,1}-(1-P_{f,1})=2P_{f,1}-1=g(1).
$$
Rearranging terms we obtain the probability for having the same orientation as originally to
\begin{equation}
 P_{f,1}=\frac{1+g(1)}{2}.
 \label{pf1d1k}
\end{equation}
Substituting Supplementary Eq.~(\ref{pf1d1k}) into Supplementary Eq.~(\ref{norm1d1k}) we obtain
the probability for the particle to turn around after the first time step,
\begin{equation}
 P_{r,1}=\frac{1-g(1)}{2}.
 \label{pb1d1k}
\end{equation}
After inspection of Supplementary Eqs.~(\ref{pf1d1k}) and (\ref{pb1d1k}), and recalling
that in the 1D lattice $c_1=1$ and $c_2=-1$, both probabilities can be
written as a single expression,
\begin{equation}
 P_{i_1,1}=\frac{1+\left[ c_{i_0} \cdot c_{i_1} \right] g(1)}{2},
 \label{px1d1k}
\end{equation}
where $i$ is a placeholder variable for either $f$ or $r$.
\paragraph{Time step $k=2$}
After two time steps, we have four different possible paths for the
particle, with four different probabilities.  If we assume the
probabilities at each time can be written as $P_{ff}=P_{f,1}P_{f,2}$,
we can expand the VACF to obtain:
\begin{equation*}
 P_{f,1}P_{f,2}-P_{f,1}P_{r,2}+P_{r,1}P_{f,2}-P_{r,1}P_{r,2}=\left(P_{f,2}-P_{r,2}\right)\left(P_{f,1}+P_{r,1}\right)=g(2),
\end{equation*}
which by employing Supplementary Eq.~(\ref{norm1d1k}) can simplified to:
\begin{equation}
 P_{f,2}-P_{r,2}=g(2).
 \label{c1d2k}
\end{equation}
Given that the probabilities in the previous time step were
normalized, it is sufficient to require that the probabilities in the
current time step be normalized:
\begin{equation}
 P_{f,2}+P_{r,2}=1.
 \label{norm1d2k}
\end{equation}
Inspecting Supplementary Eqs.~(\ref{c1d2k}) and (\ref{norm1d2k}) and comparing them
with Supplementary Eqs.~(\ref{norm1d1k}) and (\ref{c1d1k}) we can see that they are
identical except for the evaluation of $g(k)$. Therefore, for the
second time step it holds that
\begin{equation}
 P_{i_2,2}=\frac{1+\left[ c_{i_0} \cdot c_{i_2} \right] g(2)}{2}.
 \label{px1d2k}
\end{equation}
\paragraph{Any $k$}
It is easy to see that for further times we can always assume that
the probabilities are uncorrelated so that only the last orientation
in the particle's orientation history is relevant for the calculation.  If
we do, Supplementary Eqs.~(\ref{px1d1k}) and (\ref{px1d2k}) can be generalized for
any time step $k$ in the following way:
\begin{equation}
 P_{i_k,k}=\frac{1+\left[ c_{i_0} \cdot c_{i_k} \right] g(k)}{2}.
 \label{px1d}
\end{equation}
\subsection*{Two dimensions: Triangular lattice}
We will repeat the calculation we did in 1D now in 2D for two
different lattice geometries to identify possible dependencies on the
lattice dimension and/or geometry.
\paragraph{Time step $k=1$}
We have three possible lattice directions with lattice vectors given
by either $\vec{c_1}=\left(1,0\right)$,
$\vec{c_2}=\left(-\frac{1}{2},\frac{\sqrt{3}}{2}\right)$,
$\vec{c_3}=\left(-\frac{1}{2},-\frac{\sqrt{3}}{2}\right)$, or
$\vec{c_1}=\left(\frac{1}{2},\frac{\sqrt{3}}{2}\right)$, $\vec{c_2}=\left(-1,0\right)$,
$\vec{c_3}=\left(\frac{1}{2},-\frac{\sqrt{3}}{2}\right)$ on alternating nodes.
In the first time step there are three possible paths given by
$P_{r,1}$, $P_{a,1}$, and $P_{u,1}$, where $P_{r,1}$ is the
probability to reverse orientation. The normalization condition is, in
this case, given by
\begin{equation}
 P_{r,1}+P_{u,1}+P_{a,1}=1,
 \label{norm2d1ktri}
\end{equation}
while the VACF is given by
\begin{equation}
 \frac{1}{2}(P_{u,1}+P_{a,1})-P_{r,1}=g(1).
 \label{c2d1ktri}
\end{equation}
We need to make an assumption to continue, as there are more variables
than equations. We assume that the probability of turning left or
right is identical,
\begin{equation}
 P_{u,1}=P_{a,1}\coloneqq P_{f,1}.
 \label{ass4}
\end{equation}
Under this assumption we can rewrite Supplementary Eq.~(\ref{norm2d1ktri}) as
\begin{equation}
 P_{r,1}+2P_{f,1}=1,
 \label{gnorm2d1ktri}
\end{equation}
and Supplementary Eq.~(\ref{c2d1ktri}) as
\begin{equation}
 P_{f,1}-P_{r,1}=g(1).
 \label{gc2d1ktri}
\end{equation}
Substituting $P_{f,1}$ from Supplementary Eq.~(\ref{gc2d1ktri}) into
Supplementary Eq.~(\ref{gnorm2d1ktri}) we obtain
$$
P_{r,1}+2(P_{r,1}+g(1))=3P_{r,1}+2g(1)=1,
$$
which, after rearranging, gives the expression for the probability of
the particle to go back:
\begin{equation}
 P_{r,1}=\frac{1-2g(1)}{3}.
 \label{pb2d1ktri}
\end{equation}
Using Supplementary Eq.~(\ref{pb2d1ktri}) in Supplementary Eq.~(\ref{gnorm2d1ktri}) we obtain the
probability
\begin{equation}
 P_{f,1}=\frac{1+g(1)}{3}.
 \label{pf2d1ktri}
\end{equation}
Examining Supplementary Eqs.~(\ref{ass4}), (\ref{pb2d1ktri}) and (\ref{pf2d1ktri})
we can summarize them as
\begin{equation}
 P_{i_1,1}=\frac{1+2\left[ \vec{c}_{i_0} \cdot \vec{c}_{i_1} \right] g(1)}{3}.
 \label{px2d1ktri}
\end{equation}
\paragraph{Time step $k=2$}
In this case there are 9 different possible orientation histories with
9 different probabilities.  If we now denote by $f$ and $r$ the
lattice directions parallel and antiparallel to the original particle
orientation, respectively, and by $u$ and $a$ the remaining lattice
directions and assume that the probabilities are uncorrelated, we
require that probabilites at the present time step are normalized:
\begin{equation}
 P_{u,2}+P_{a,2}+P_{f,2}=1,
  \label{norm2d2ktri}
\end{equation}
while the VACF has the form
\begin{equation*}
\begin{split}
  & P_{u,1}P_{f,2}-\frac{1}{2}P_{u,1}P_{u,2}-\frac{1}{2}P_{u,1}P_{a,2}+P_{a,1}P_{f,2} \\ &
 -\frac{1}{2}P_{a,1}P_{u,2}-\frac{1}{2}P_{a,1}P_{a,2}+P_{r,1}P_{f,2}-\frac{1}{2}P_{r,1}P_{u,2}-\frac{1}{2}P_{r,1}P_{a,2} \\ &
 =\left[ P_{f,2}- \frac{1}{2} \left( P_{u,2}+P_{a,2} \right) \right] \left( P_{u,1}+P_{a,1}+P_{r,1} \right)=g(2),
\end{split}
\end{equation*}
which, by Supplementary Eq.~(\ref{norm2d1ktri}), is simplified to:
\begin{equation}
 P_{f,2}- \frac{1}{2} \left( P_{u,2}+P_{a,2} \right)=g(2).
 \label{c2d2ktri}
\end{equation}
To continue, we impose the isotropy condition Supplementary Eq.~(\ref{ass4})
denoting by $P_{r,2}$ the probabilities $P_{u,2}$ and $P_{a,2}$. With
these assumptions the normalization condition reads
\begin{equation}
 P_{f,2}+2P_{r,2}=1,
 \label{gnorm2d2ktri}
\end{equation}
while the VACF is now
\begin{equation}
 P_{f,2}-P_{r,2}=g(2).
 \label{gc2d2ktri}
\end{equation}
Inserting $P_{f,2}$ from Supplementary Eq.~(\ref{gnorm2d2ktri}) into
Supplementary Eq.~(\ref{gc2d2ktri}) we obtain the probability $P_{r,2}$:
\begin{equation}
 P_{r,2}=\frac{1-g(2)}{3}
 \label{pb2d2ktri}
\end{equation}
and, using the normalization condition Supplementary Eq.~(\ref{gnorm2d2ktri}) we
obtain the probability $P_{f,2}$:
\begin{equation}
 P_{f,2}=\frac{1+2g(2)}{3}.
 \label{pf2d2ktri}
\end{equation}
These probabilities can be written in the general form
\begin{equation}
 P_{i_2,2}=\frac{1+2\left[ \vec{c}_{i_0} \cdot \vec{c}_{i_2} \right] g(2)}{3}.
 \label{px2d2ktri}
\end{equation}
\paragraph{Any $k$}
From Supplementary Eqs.~(\ref{px2d1ktri}) and (\ref{px2d2ktri}) we can see that for any
further time step $k$ and making the same assumptions as before the
probabilities are given by
\begin{equation}
 P_{i_k,k}=\frac{1+2\left[ \vec{c}_{i_0} \cdot \vec{c}_{i_k} \right] g(k)}{3}.
 \label{px2dtri}
\end{equation}
\subsection*{Two dimensions: Square lattice}
\paragraph{Time step $k=1$}
There are four possible lattice directions with lattice vectors
$\vec{c_1}=(1,0)$, $\vec{c_2}=(0,1)$, $\vec{c_3}=(-1,0)$, and
$\vec{c_4}=(0,-1)$. Therefore there are four possible probabilities,
so the normalization condition reads
\begin{equation}
 P_{r,1}+P_{f,1}+P_{u,1}+P_{a,1}=1,
 \label{norm2d1ksq}
\end{equation}
where $P_{u,1}$ and $P_{a,1}$ are the probabilities of going in the
two directions orthogonal to the original orientation of the particle. We
now expand the VACF to obtain
\begin{equation}
 P_{f,1}-P_{r,1}=g(1).
 \label{c2d1ksq}
\end{equation}
Right from the start we have more variables than equations, so we
need to make one more assumption in order to continue with the
derivation. To simplify we assume the following:
\begin{equation}
 P_{u,1}=P_{a,1} \coloneqq \frac{1}{4}.
 \label{ass2}
\end{equation}
With this assumption the normalization condition becomes
\begin{equation}
 P_{f,1}+P_{r,1}=\frac{1}{2}.
 \label{gnorm2d1ksq}
\end{equation}
Inserting Supplementary Eq.~(\ref{gnorm2d1ksq}) into Supplementary Eq.~(\ref{c2d1ksq}) we obtain
$$
P_{f,1}-(\frac{1}{2}-P_{f,1})=2P_{f,1}-\frac{1}{2}=g(1).
$$
Rearranging terms we obtain the probability
\begin{equation}
 P_{f,1}=\frac{1+2g(1)}{4}.
 \label{pf2d1ksq}
\end{equation}
Inserting Supplementary Eq.~(\ref{pf2d1ksq}) into Supplementary Eq.~(\ref{gnorm2d1ksq}) we obtain
the remaining probability
\begin{equation}
 P_{r,1}=\frac{1-2g(1)}{4}.
 \label{pb2d1ksq}
\end{equation}
Supplementary Equations~(\ref{ass2}), (\ref{pf2d1ksq}) and (\ref{pb2d1ksq}) can then be
summarized as
\begin{equation}
 P_{i_1,1}=\frac{1+2\left[ \vec{c}_{i_0} \cdot \vec{c}_{i_1} \right] g(1)}{4}.
 \label{px2d1ksq}
\end{equation}
\paragraph{Time step $k=2$}
There are now 16 different possible histories for the traveling
particle. As before we assume that the probabilities are uncorrelated
which, together with Supplementary Eq.~(\ref{norm2d1ksq}), allows us to write the
normalization condition as
\begin{equation}
\begin{split}
 P_{r,2}+P_{f,2}+P_{u,2}+P_{a,2}=1,
 \end{split}
 \label{norm2d2ksq}
\end{equation}
and the VACF now is
\begin{equation*}
\begin{split}
 & P_{f,1}P_{f,2}+P_{u,1}P_{f,2}+P_{r,1}P_{f,2}+P_{a,1}P_{f,2}-\\
 & (P_{f,1}P_{r,2}+P_{u,1}P_{r,2}+P_{r,1}P_{r,2}+P_{a_1}P_{r,2})= \\ &
 \left( P_{f,2}-P_{r,2} \right)\left( P_{f,1}+P_{u,1}+P_{r,1}+P_{a,1} \right)=g(2),
 \end{split}
\end{equation*}
which, by using Supplementary Eq.~(\ref{norm2d1ksq}), is simplified to
\begin{equation}
P_{f,2}-P_{r,2}=g(2).
  \label{c2d2ksq}
\end{equation}
We see that Supplementary Eqs.~(\ref{norm2d1ksq}) and (\ref{norm2d2ksq}), and
(\ref{c2d1ksq}) and (\ref{c2d2ksq}) are practically identical. Therefore,
by making the same assumptions, we arrive at the following
expression for the probabilities at $k=2$:
\begin{equation}
 P_{i_2,2}=\frac{1+2\left[ \vec{c}_{i_0} \cdot \vec{c}_{i_2} \right] g(2)}{4}.
\end{equation}
\paragraph{Any $k$}
If we continue making the assumptions we have done until now, the probabilities can be generalized in a straightforward way as
\begin{equation}
 P_{i_k,k}=\frac{1+2\left[ \vec{c}_{i_0} \cdot \vec{c}_{i_k} \right] g(k)}{4}.
 \label{px2dsq}
\end{equation}
\subsection*{Three dimensions: cubic lattice}
\paragraph{Time step $k=1$}
In this case we have six lattice directions given by
$\vec{c_1}=(1,0,0)$, $\vec{c_2}=(0,1,0)$, $\vec{c_3}=(0,0,1)$,
$\vec{c_4}=(-1,0,0)$, $\vec{c_5}=(0,-1,0)$, and $\vec{c_6}=(0,0,-1)$.
We denote the lattice direction parallel to the initial orientation
with the subindex $f$, the contrary direction by $r$ and the rest by
$u$, $a$, $d$, and $s$. The normalization condition is
\begin{equation}
 P_{f,1}+P_{u,1}+P_{r,1}+P_{d,1}+P_{a,1}+P_{s,1}=1.
 \label{norm3d1k}
\end{equation}
The VACF is given by
\begin{equation}
 P_{f,1}-P_{r,1}=g(1).
 \label{c3d1k}
\end{equation}
Similarly as in the case of the square lattice, we impose the
following condition which allows deriving the reorientation
probabilities:
\begin{equation}
 P_{u,1}=P_{a,1}=P_{s,1}=P_{d,1}=\frac{1}{6},
 \label{ass5}
\end{equation}
which enables us to simplify the normalization condition in the
following way:
\begin{equation}
 P_{f,1}+P_{r,1}=\frac{1}{3}.
 \label{gnorm3d1k}
\end{equation}
Using Supplementary Eq.~(\ref{gnorm3d1k}) to substitute $P_{r,1}$ into
Supplementary Eq.~(\ref{c3d1k}) we obtain
$$
P_{f,1}-(\frac{1}{3}-P_{f,1})=2P_{f,1}-\frac{1}{3}=g(1),
$$
which, after rearranging terms yields the probability
\begin{equation}
 P_{f_1}=\frac{1+3g(1)}{6}.
 \label{pf3d1k}
\end{equation}
Now, using Supplementary Eq.~(\ref{gnorm3d1k}) we can obtain the remaining probability
\begin{equation}
 P_{r,1}=\frac{1-3g(1)}{6}.
 \label{pb3d1k}
\end{equation}
Examining Supplementary Eqs.~(\ref{ass5}), (\ref{pf3d1k}) and (\ref{pb3d1k}) we
arrive at the general expression
\begin{equation}
 P_{i_1,1}=\frac{1+3\left[ \vec{c}_{i_0} \cdot \vec{c}_{i_1} \right] g(1)}{6}.
 \label{px3d1k}
\end{equation}
\paragraph{Any $k$}
As done before we can continue the process for further times and,
making the same assumptions, we arrive at an equation as
Supplementary Eq.~(\ref{px3d1k}) for any time $k$:
\begin{equation}
 P_{i_k,k}=\frac{1+3\left[ \vec{c}_{i_0} \cdot \vec{c}_{i_k} \right] g(k)}{6}.
 \label{px3d}
\end{equation}
\subsection*{Any dimension, any lattice geometry, any time}
Now that probabilities were derived for several dimensions,
geometries, and times, we can see from Supplementary Eqs.~(\ref{px1d}), (\ref{px2dtri}),
(\ref{px2dsq}) and (\ref{px3d}) that the general form of the
probabilities is given by
\begin{equation}
 P_{i_k,k}=\frac{1+d\left[ \vec{c}_{i_0} \cdot \vec{c}_{i_k} \right] g(k)}{b},
\end{equation}
where $d$ is the spatial dimension and $b$ is the number of nearest neighbors.

\section {MSD of the piecewise process}
We will now calculate the MSD using probabilities such that the VACF is a power-law decaying piecewise function defined as
\begin{equation}
g(t)= \begin{cases} 1 & t\leq t^{\star} \\  C_0\left(\frac{\Delta}{t}\right)^{\phi} & t>t^{\star} \end{cases},
\end{equation}
where $t^{\star}$ is such that $
C_0\left(\frac{\Delta}{t^{\star}}\right)^{\phi}=1$. It is
straightforward to see that the probabilities which define such a VACF
obey
\begin{equation}
P_{i_k,k}= \begin{cases} \delta_{i,i_0} & k \leq \omega \\ \frac{1+d\left[ \vec{c}_{i_0} \cdot \vec{c}_{i_k} \right] g(k)}{b} & k>\omega\end{cases}
\label{probpiece}
\end{equation}
where $i_0$ is the index of the velocity channel the particle started in
and $\omega$ is such that $t^{\star}=\omega \tau$.  It is easy to see
that in the first $\omega$ time steps the MSD is defined by
$$
\left\langle r^2 \right\rangle(k)=k^2\varepsilon^2,
$$
or, using the definition of the particle speed and taking the limit $\tau
\rightarrow 0$:
\begin{equation}
\left\langle r^2 \right\rangle(t)=(vt)^2.
\label{msdminus}
\end{equation}
We will now calculate the MSD for time steps greater than
$\omega$. The calculation will be made for a 1D lattice, but the
results are identical for any dimension and lattice geometry. To ease
notation, we will omit any subindices refering to time steps $k\leq
\omega$, as we know that, given Supplementary Eq.~(\ref{probpiece}), only those
trajectories where the first $\omega$ orientations of the particle are
identical to the original orientation of the particle have non-zero
probabilities.
\paragraph{$\omega+1$}
At the first time step after $\omega$ time steps have elapsed, we find
that the MSD is given by
$$
\left\langle r^2 \right\rangle (\omega+1)=r_f^2 P_{f,\omega+1}+r_r^2 P_{r,\omega+1},
$$
where the displacements are $r_f^2=(\omega+1)^2 \varepsilon^2$ and
$r_r^2=(\omega-1)^2 \varepsilon^2$.  Using Supplementary Eq.~(\ref{probpiece}) and
substituting the square displacements we obtain
\begin{equation*}
\begin{split}
\left\langle r^2 \right\rangle (\omega+1) & = (\omega+1)^2 \varepsilon^2 \left[ \frac{1+g\left( \omega+1 \right)}{2} \right] +  (\omega-1)^2 \varepsilon^2 \left[ \frac{1-g\left( \omega+1 \right)}{2} \right] \\ & 
=\frac{\varepsilon^2}{2}\left\{ (\omega^2 +2\omega +1)[1+g(\omega+1)]+(\omega^2-2\omega+1)[1-g(\omega+1)] \right\} \\ &
=\frac{\varepsilon^2}{2} \left\{ 2\omega^2+2+2\omega[1+g(\omega+1)-1+g(\omega+1) ] \right\}
\end{split}
\end{equation*}
which reduces to
\begin{equation}
\left\langle r^2\right\rangle(\omega+1) =\varepsilon^2[\omega^2+1+2\omega g(\omega+1) ].
\end{equation}
\paragraph{$\omega+2$}
Now, the MSD can be expanded in the following way:
\begin{equation*}
\begin{split}
\left\langle r^2\right\rangle(\omega+2) & = \varepsilon^2 \left\{ (\omega+2)^2\left[\frac{1+g(\omega+1)}{2}\right]\left[\frac{1+g(\omega+2)}{2}\right] \right. \\ & +\omega^2 \left[\frac{1+g(\omega+1)}{2}\right]\left[\frac{1-g(\omega+2)}{2}\right] \\ &
+ \omega^2 \left[\frac{1-g(\omega+1)}{2}\right]\left[\frac{1+g(\omega+2)}{2}\right] \\ & \left. +(\omega-2)^2\left[\frac{1-g(\omega+1)}{2}\right]\left[\frac{1-g(\omega+2)}{2}\right] \right\} \\ &
\frac{\varepsilon^2}{4} \left\{ (\omega^2+4\omega+4)[1+g(\omega+1)+g(\omega+2)+g(\omega+1)g(\omega+2)]\right. \\ &
+ \omega^2[1+g(\omega+1)-g(\omega+2)-g(\omega+1)g(\omega+2)] \\ &
+\omega^2[1-g(\omega+1)+g(\omega+2)-g(\omega+1)g(\omega+2)] \\ &
\left. + (\omega^2-4\omega+4)[1-g(\omega+1)-g(\omega+2)+g(\omega+1)g(\omega+2)] \right\},
\end{split}
\end{equation*}
which reduces to:
\begin{equation}
\left\langle r^2\right\rangle(\omega+2)=\varepsilon^2\left\{ \omega^2+2+2g(\omega+1)g(\omega+2)+2\omega[g(\omega+1)+g(\omega+2)] \right\}.
\end{equation}
\paragraph{Any $k$}
We can proceed for further $k$ and will arrive at the following
expression for any $k>\omega$:
\begin{equation}
\begin{split}
\left\langle r^2\right\rangle (\omega+k) & =\varepsilon^2\left[\omega^2+k+2\sum_{i=1}^k\sum_{j=i}^kg(\omega+i)g(\omega+j)\right. \\ & \left.-2\sum_{i=1}^kg^2(\omega+1)+2\omega\sum_{i=1}^kg(\omega+i)\right]
\end{split}
\end{equation}
which by using the definition of the diffusion coefficient and particle speed can be converted to
\begin{equation*}
\begin{split}
\left\langle r^2\right\rangle(\omega+k) & = 2dD\left[k\tau-2\sum_{i=1}^kg^2(\omega+i)\tau \right]+v^2\left[2\sum_{i=1}^k\sum_{j=i}^kg(\omega+i)g(\omega+j)\tau^2\right. \\ &
\left. +(\omega \tau)^2+2\omega \tau \sum_{i=1}^kg(\omega+i)\tau \right]
\end{split}
\end{equation*}
which in the limit $\tau \rightarrow 0$ is
\begin{equation}
\begin{split}
\left\langle r^2\right\rangle(t) & =2dD\left[(t-t^{\star})-2\int_{t^{\star}}^tg^2(\tau)\mathrm{d}\tau \right] \\ &
+v^2\left[2\int_{t^{\star}}^t\int_{\tau}^tg(\tau)g(k)\mathrm{d}k\mathrm{d}\tau + t^{\star 2}+ 2t^{\star}\int_{t^{\star}}^tg(\tau)\mathrm{d}\tau\right].
\end{split}
\label{msdplus}
\end{equation}
Combining Supplementary Eqs.~(\ref{msdminus}) and (\ref{msdplus}) we obtain the MSD
of a particle with a piecewise power-law decaying VACF:
\begin{equation}
\left\langle r^2\right\rangle(t) = \begin{cases}(vt)^2 & t\leq t^{\star}, \\
\!\begin{aligned}
& 2dD\left[(t-t^{\star})-2\int_{t^{\star}}^tg^2(\tau)\mathrm{d}\tau \right] \\ &
+v^2\left[2\int_{t^{\star}}^t\int_{\tau}^tg(\tau)g(k)\mathrm{d}k\mathrm{d}\tau + t^{\star 2}+ 2t^{\star}\int_{t^{\star}}^tg(\tau)\mathrm{d}\tau\right]
\end{aligned} & t>t^{\star}.
\end{cases}
\end{equation}

\section{Generalized time-correlated random walk: rule derivation}
We maximize the caliber
\begin{equation}
 \mathcal{C}=-\sum_{\Gamma}P_{\Gamma}\ln P_{\Gamma},
\end{equation}
subject to observing a certain VACF, which translates into the
Lagrange multiplier problem
\begin{equation}
 \tilde{\mathcal{C}}\left[P_{\Gamma}\right]=-\sum_{\Gamma}P_{\Gamma}\ln P_{\Gamma}+\sum_{i=1}^k \beta(i) \left[ \sum_{\Gamma}P_{\Gamma}\left(\vec{c}_{n_0} \cdot \vec{c}_{n_i}\right)-g(i)\right]+\lambda\left(\sum_{\Gamma}P_{\Gamma}-1\right),
\end{equation}
This yields the trajectory probabilites
\begin{equation}
 P_{\Gamma}=\frac{1}{Z}\exp\left[\sum_{i=1}^k\beta(i)\left(\vec{c}_{n_0} \cdot \vec{c}_{n_i}\right)\right],
 \label{pGamm}
\end{equation}
where $Z=\exp\left(1-\lambda\right)$ is called the dynamical partition
function which, by optimizing the functional with respect to $\lambda$
(i.e., $\frac{\partial \tilde{\mathcal{C}}}{\partial \lambda}=0$), is
given by $
Z=\sum_{\Gamma}\exp\left[\sum_{i=1}^k\beta(i)\left(\vec{c}_{n_0} \cdot
    \vec{c}_{n_i}\right)\right].  $ Optimizing with respect to $\beta(i)$
yields our original constraint
\begin{equation}
 g(k)=\sum_{\Gamma}P_{\Gamma}\left(\vec{c}_{n_0} \cdot \vec{c}_{n_i}\right).
 \label{vacf_lgca2}
\end{equation}
Solving for $\beta(i)$ using Supplementary Eqs.~(\ref{pGamm}) and (\ref{vacf_lgca2}) can
be quite challenging, so we expand $P_{\Gamma}$ in a Taylor series
around $\beta(i)=0$, which reduces to $ g(k)\approx
\sum_{\Gamma}\frac{1}{Z}\left[1+\sum_{i=0}^k\beta(i)\left(\vec{c}_{n_0}
    \cdot \vec{c}_{n_i}\right)\right]\left(\vec{c}_{n_0} \cdot
  \vec{c}_{n_k}\right), $ where the dynamical partition function is
simplified as
\begin{equation*}
 \begin{split}
  Z & \approx\sum_{\Gamma}\left[1+\sum_{i=0}^k\beta(i)\left(\vec{c}_{n_0} \cdot \vec{c}_{n_i}\right)\right]=b^k+\sum_{\Gamma}\sum_{i=1}^k\beta(i)\left(\vec{c}_{n_0} \cdot \vec{c}_{n_i}\right)=b^k+\sum_{i=1}^k\beta(i)\sum_{\Gamma}\left(\vec{c}_{n_0} \cdot \vec{c}_{n_i}\right) \\&
  =b^k+\sum_{i=1}^k\beta(i)\sum_{\Gamma}\cos\theta_i=b^k,
 \end{split}
\end{equation*}
where $b$ is the number of lattice directions and $\theta_i$ is the
angle between the original particle orientation and the particle orientation
at time step $i$.  $\sum_{\Gamma}\cos\theta_i=0$ because the lattice
directions and hence the possible particle orientations are
symmetrically and homogeneously distributed.  Substituting $Z$, using
the same notation as previously, employing trigonometric identities,
and denoting the spatial dimension by $d$, we proceed with the
calculation:
\begin{equation*}
 \begin{split}
  g(k) & \approx b^{-k}\left[\sum_{\Gamma}\cos \theta_k + \sum_{\Gamma}\sum_{i=1}^k\beta(i)\cos \theta_i \cos \theta_k\right]=b^{-k}\sum_{\Gamma}\sum_{i=1}^k\beta(i)\cos \theta_i \cos \theta_k =b^{-k}\sum_{i=1}^k\sum_{\Gamma}\beta(i)\cos \theta_i \cos \theta_k \\
  & =b^{-k}\left[\beta(1)\sum_{\Gamma}\cos \theta_1 \cos \theta_k+\beta(2)\sum_{\Gamma}\cos \theta_2 \cos \theta_k+ \cdots +\beta(k-1)\sum_{\Gamma}\cos \theta_{k-1} \cos \theta_k+\beta(k)\sum_{\Gamma} \cos^2 \theta_k\right] \\
  & =b^{-k}\beta(k)\sum_{\Gamma} \cos^2 \theta_k=\frac{\beta(k)}{2b^k}\sum_{\Gamma}\left[1+\cos\left(2\theta_k\right)\right]=\frac{\beta(k)}{2b^k}\left[b^k+\sum_{\Gamma}\cos\left(2\theta_k\right)\right]=\frac{\beta(k)}{2b^k}\left[b^k+\frac{b^k}{d}\left(2-d\right)\right] \\
  & =\frac{\beta(k)}{2}\left[ 1+\frac{2-d}{d} \right]=\frac{\beta(k)}{d},
 \end{split}
\end{equation*}
which determines the Lagrange multiplier
\begin{equation}
 \beta(k)=dg(k).
\end{equation}
So the generalized probabilities are finally
\begin{equation}
 P_{\Gamma}=\frac{1}{Z}\exp\left[\sum_{i=1}^kdg(i)\left(\vec{c}_{n_0} \cdot \vec{c}_{n_i}\right)\right],
\end{equation}
which is the probability for the whole trajectory. Due to the
exponential form of this probability, we can decompose the
trajectory probability into reorientation probabilities:
\begin{equation}
 P_{\Gamma}=\prod_{i=1}^kP_{x,k},
\end{equation}
given by:
\begin{equation}
 P_{n_k,k}=\frac{1}{z}\exp\left[dg(k)\left(\vec{c}_{n_0} \cdot \vec{c}_{n_k}\right)\right],
\end{equation}
where $z$ is the normalization constant for the reorientation probability.

\section{Generalized time-correlated random walk: VACF decay analysis}
Eq.~(\ref{decaytanpow}) is at first sight, different from a simple
power law decay. We now assess how similar Eq.~(\ref{decaytanpow}) is to
a simple power law decay for intermediate times.  The easiest and most
insightful way to achieve this is to expand both Eq.~(\ref{decaytanpow})
and a generic power law in a Taylor series, and to compare the Taylor
coefficients. We expand around $t=\Delta$.  We will denote
Eq.~(\ref{decaytanpow}) by $C(t)$. The power law function has the
following form:
\begin{equation}
 G(t)=G_1\left(\frac{\Delta}{t}\right)^{\gamma},
\label{powapprox}
\end{equation}
where the constants $G_1$ and $\gamma$ are unspecified.  First, we
calculate the first two derivatives of $C(t)$:
\begin{subequations}
 \begin{align}
  \frac{\mathrm{d}C(t)}{\mathrm{d}t}=-\phi C_0\Delta^{\phi} t^{-\phi-1}\sech^2\left[C_0\left( \frac{\Delta}{t} \right)^{\phi}\right] & \\
   \frac{\mathrm{d}^2C(t)}{\mathrm{d}t^2}=\phi C_0 \Delta^{\phi}t^{-2\phi-2}\sech^2\left[ C_0\left( \frac{\Delta}{t} \right)^{\phi} \right]\left\{ (\phi+1)t^{\phi}-2\phi C_0 \Delta^{\phi} \tanh \left[C_0\left( \frac{\Delta}{t} \right)^{\phi}\right]\right\},
 \end{align}
\end{subequations}
with which we can calculate its Taylor series up to the second order
term:
\begin{equation}
\begin{split}
 C(t) & =\tanh\left(C_0\right)-\frac{\phi C_0}{\Delta}\sech^2\left(C_0\right)(t-\Delta)+\\ & \frac{1}{2!}\frac{\phi C_0 \sech^2\left(C_0\right)}{\Delta^2}\left\{1+\phi\left[1-2C_0\tanh\left(C_0\right)\right]\right\}(t-\Delta)^2+\mathcal{O}(t^3).
 \end{split}
 \label{taytanh}
\end{equation}
We now proceed in the same way with the power law:
\begin{subequations}
 \begin{align}
  \frac{\mathrm{d}G(t)}{\mathrm{d}t}=-G_1\Delta^\gamma \gamma t^{-\gamma-1} & \\
  \frac{\mathrm{d}^2G(t)}{\mathrm{d}t^2}=G_1\Delta^\gamma \gamma(1+\gamma)t^{-\gamma-2}
 \end{align}
\end{subequations}
and expand in a Taylor series around $t=\Delta$:
\begin{equation}
 G(t)=G_1-\frac{G_1\gamma}{\Delta} (t-\Delta)+\frac{1}{2!}\frac{G_1\gamma}{\Delta^2}(1+\gamma)(t-\Delta)^2+\mathcal{O}(t^3).
 \label{pretaypow}
\end{equation}
To determine $G_1$ and $\gamma$ we equate the zeroth and first order terms of Supplementary Eqs.~(\ref{taytanh}) and (\ref{pretaypow}), which yields
\begin{subequations}
 \begin{align}
  G_1=\tanh\left(C_0\right) \label{g1} & \\ 
  \gamma=\phi C_0\frac{\sech^2\left(C_0\right)}{\tanh\left(C_0\right)} \label{gamma} ,
 \end{align}
\end{subequations}
so that the Taylor series expansion is determined by
\begin{equation}
\begin{split}
 G(t) & =\tanh\left(C_0\right)-\frac{\phi C_0}{\Delta} \sech^2\left(C_0\right)(t-\Delta)+ \\ & \frac{1}{2!}\frac{\phi C_0 \sech^2\left(C_0\right)}{\Delta^2}\left[1+\phi C_0 \frac{\sech^2\left(C_0\right)}{\tanh\left(C_0\right)}\right](t-\Delta)^2+\mathcal{O}(t^3).
\end{split}
\label{taypow}
 \end{equation}
 To estimate the similarity between both decays, we calculate the
 difference between Supplementary Eqs.~(\ref{taytanh}) and (\ref{taypow}) up to
 second order terms:
\begin{equation*}
 \begin{split}
  C(t)-G(t) & \approx \frac{1}{2!}\frac{\phi C_0 \sech^2\left(C_0\right)}{\Delta^2}\left\{1+\phi\left[1-2C_0 \tanh\left(C_0\right)\right]\right\}(t-\Delta)^2 - \\ & \frac{1}{2!}\frac{\phi C_0\sech^2\left(C_0\right)}{\Delta^2}\left[1+\phi C_0 \frac{\sech^2\left(C_0\right)}{\tanh\left(C_0\right)}\right](t-\Delta)^2 = (t-\Delta)^2\cdot \\ &
  \frac{1}{2!}\frac{\phi C_0\sech^2\left(C_0\right)}{\Delta^2}\left\{ 1+\phi\left[1-2C_0\tanh\left(C_0\right)\right] - 1-\phi C_0\frac{\sech^2\left(C_0\right)}{\tanh\left(C_0\right)} \right\} \\ &
  =(t-\Delta)^2\frac{1}{2!}\frac{\phi^2C_0\sech^2\left(C_0\right)}{\Delta^2}\left[ 1-2C_0\tanh\left(C_0\right)-C_0\frac{\sech^2\left(C_0\right)}{\tanh\left(C_0\right)} \right] \\ &
  =(t-\Delta)^2\frac{1}{2!}\frac{\phi^2C_0\sech^2\left(C_0\right)}{\Delta^2}\left\{ 1-C_0\left[\frac{2\sinh(C_0)}{\cosh(C_0)}-\frac{1}{\cosh(C_0)\sinh(C_0)} \right]\right\} \\ &
  =(t-\Delta)^2\frac{1}{2!}\frac{\phi^2C_0\sech^2\left(C_0\right)}{\Delta^2}\left\{ 1-C_0\left[\frac{2\sinh^2(C_0)+1}{\cosh(C_0)\sinh(C_0)}\right] \right\}
 \end{split}
\end{equation*}
which, after using hyperbolic identities, can be simplified to
\begin{equation}
 C(t)-G(t) \approx (t-\Delta)^2\frac{1}{2!}\frac{\phi^2C_0\sech^2\left(C_0\right)}{\Delta^2}\left[ 1-2C_0\coth\left(2C_0\right) \right] \propto \left(\frac{\phi}{\Delta}\right)^2.
 \label{difftay}
\end{equation}

\section{LGCA simulations}
\subsection*{Persistent random walk}
Simulations were performed with only one particle with the reorientation 
probability given by Eq.~(\ref{lgcatrans}) whose displacement and
orientation were tracked at every time step. The lattice spacing was
set to $\varepsilon=0.25$, and the time step to $\tau=0.015625$.  The
total simulation consisted of 100 time steps. The sensitivity (related
to the internal force required for reorientation) was varied from
$\beta=3$ to $\beta=5$.  Simulations were repeated 1000 times for each
sensitivity in order to obtain statistically relevant
results. Simulation results for low and high sensitivities are shown
in Fig.~\ref{theovssims}i.

As expected, correlations die off more slowly with increasing
sensitivity. On the other hand, the MSD quickly starts behaving
linearly, except for times close to zero, where it behaves almost
ballistically. The region where displacement is almost ballistic
increases with increasing sensitivity.

Additionally, we observe that the derived continuous time expressions
agree perfectly with the discrete LGCA simulations.

\subsection*{Time-correlated random walk}
Simulations were performed with only one particle with the reorientation
probability given by Eq.~(\ref{pxgeneral}). The lattice spacing was
set to $\varepsilon=0.25$, and the time step to $\tau=0.015625$. The
constant $C_0$ was set to $0.5$, and the crossover time was equal to
the time step length, $\Delta=\tau=0.015625$.  The total simulation consisted of 1000 time
steps. Three different exponents were evaluated: $\phi=0.1$, $\phi=1$,
and $\phi=9$.  Simulations were repeated 1000 times for each exponent,
in order to obtain statistically relevant results. Simulation results
for small and large exponents are shown in Fig.~\ref{theovssims}ii, as well as
a plot of Eqs.~(\ref{anom_corr}) and~(\ref{msdgenpow}) (integrated with
MATLAB). We can see that Eqs.~(\ref{anom_corr}) and~(\ref{msdgenpow})
match the simulation data perfectly. We also observe that for low values of the exponent
$\phi$ the particle moves superdiffusively while for large values the
particle diffuses normally.

\subsection*{Generalized time-correlated random walk}
Simulations were performed with only one particle with probabilities given
by Eq.~(\ref{pxgeneral}). The lattice spacing was set to
$\varepsilon=0.25$ and the time step to $\tau=0.015625$. The constant
$C_0$ was set to $0.5$ and the crossover time was equal to
the time step length, $\Delta=\tau=0.015625$.
 The total simulation consisted of 100 time steps. Two
different exponents were evaluated, $\phi=0.1$ and $\phi=1$.
Simulations were repeated 1000 times for each exponent, in order to
obtain statistically relevant results. Simulation results for small,
and large exponents are shown in Fig.~\ref{theovssims}iii as well as a plot of
Eqs.~(\ref{decaytanpow}) and~(\ref{contmsdexp}) (integrated with
MATLAB). We can see that Eq.~(\ref{decaytanpow}) and~(\ref{contmsdexp})
match the simulation data perfectly.  Comparing Figs.~\ref{theovssims}ii and
\ref{theovssims}iii, it is evident that the VACF in both cases is quite similar, as
expected given the small value of $\Delta$ used in these simulations.
 We also observe that for low values of the exponent
$\phi$ the particle moves superdiffusively while for large values the
particle diffuses normally.

\end{appendix}
\end{document}